\newcommand{\sC}{{\mathbb C}}
\newcommand{\sH}{{\mathbb H}}
\newcommand{\sN}{{\mathbb N}}
\newcommand{\sR}{{\mathbb R}}
\newcommand{\sZ}{{\mathbb Z}}
\newcommand{\cD}{{\mathcal D}}
\newcommand{\cH}{{\mathcal H}}
\newcommand{\cL}{{\mathcal L}}
\newcommand{\cR}{{\mathcal R}}
\newcommand{\ind}{{\rm ind} \,}
\newcommand{\op}{{\rm op} \,}
\newcommand{\spec}{{\rm sp} \,}
\newcommand{\spd}{{\rm sp}_{dis} \,}
\newcommand{\spe}{{\rm sp}_{ess} \,}
\newcommand{\rot}{{\rm rot} \,}
\newcommand{\grad}{{\rm grad} \,}
\newtheorem{theorem}{Theorem}
\newtheorem{corollary}[theorem]{Corollary}
\newtheorem{proposition}[theorem]{Proposition}
\newenvironment{proof}[1][Proof]{\noindent\textbf{#1.} }{\hfill \rule{0.5em}{0.5em}}
\begin{document}
\title{Agmon's type estimates of exponential behavior of solutions of
systems of elliptic partial differential equations.\\
Applications to Schr\"{o}dinger, Moisil-Theodorescu and Dirac
operators.}
\author{V. Rabinovich\thanks{Partially supported by the DFG
grant 444 MEX-112/2/05.}, S. Roch \\
Vladimir Rabinovich, Instituto Polit\'{e}cnico Nacional, \\
ESIME-Zacatenco, Av. IPN, edif.1, M\'{e}xico D.F., \ \\
07738, M\'{E}XICO, \\e-mail: vladimir.rabinovich@gmail.com\\
Steffen Roch, Technische Universit\"{a}t Darmstadt, \\
Schlossgartenstrasse 7, 64289 Darmstadt, Germany,\\
e-mail: roch@mathematik.tu-darmstadt.de }
\date{}
\maketitle
\begin{abstract}
The aim of this paper is to derive Agmon's type exponential
estimates for solutions of elliptic systems of partial
differential equations on $\sR^n$. We show that these estimates
are related with the essential spectra of a family of associated
differential operators which depend on the original operator, and
with exponential weights which describe the decrease of solutions
at infinity. The essential spectra of the involved operators are
described by means of their limit operators.

The obtained results are applied to study the problem of
exponential decay of eigenfunctions of matrix Schr\"{o}dinger,
Moisil-Theodorescu, and Dirac operators.
\end{abstract}
\section{Introduction}
The main aim of the paper is to obtain Agmon's type exponential
estimates for the decaying behavior of solutions of systems
of elliptic partial differential equations with variable
coefficients. Exponential decay estimates of this type are intensively
studied in the literature. We only mention Agmon's pioneering papers
\cite{Ag1,Ag} where estimates of eigenfunctions of second order
elliptic operators were obtained in terms of a special metric, now
called the Agmon metric. Exponential estimates for solutions of
pseudodifferential equations on $\sR^n$ are also considered in
\cite{LR,Man,Mar,Mar1,Nakam,RTUB,Z,CM}.

In this paper, we propose a new approach to exponential estimates for
solutions of systems of partial differential equations. Our approach is
based on the limit operators method. This method was employed earlier
to study the essential spectrum of perturbed pseudodifferential
operators, which has found applications to electro-magnetic Schr\"{o}dinger
operators, square-root Klein-Gordon, and Dirac operators under very
general assumptions for the magnetic and electric potentials at infinity.
Based on the limit operators method, a simple and transparent
proof of the well known Hunziker, van Winter, Zjislin theorem (HWZ-Theorem)
for multi-particle Hamiltonians was derived in \cite{CM,RJMP}. In
\cite{RRJP}, the limit operators method was applied to study the
essential spectrum of discrete Schr\"{o}dinger operators.

Let
\[
A(x,D)u(x) = \sum_{|\alpha| \le m} a_\alpha(x) D^\alpha u(x), \quad
x \in \sR^n
\]
be a uniformly elliptic system of partial differential operators
of order $m$ on $\sR^n$ with bounded and uniformly continuous
$N \times N$ matrix-valued coefficients $a_\alpha$. Further let $w
:= \exp v$ be a weight on $\sR^n$ such that $\lim_{x \to \infty} w(x)
= \infty$ and assume that all first and higher order derivatives of
$v$ exist and are bounded.

With the operator $A(x,D)$ and the weight $w$, we associate a family 
of partial differential operators,
\[
B_t := A(x,D + it \nabla v(x)), \quad t \in [-1, 1].
\]
We let $\spe B_t$ denote the essential spectrum of $B_t$ considered
as an unbounded closed operator on the Hilbert space $L^2(\sR^n,\sC^N)$
with domain $H^m(\sR^n, \sC^N)$. Further we let $L^2(\sR^n,\sC^N,w)$ and
$H^m(\sR^n,\sC^N,w)$ refer to the corresponding weighted spaces of
vector-valued functions $u$ such that $wu \in L^2(\sR^n, \sC^N)$ and
$wu \in H^m(\sR^n, \sC^N)$, respectively. The following result, which
will be proved in this paper, provides exponential estimates for the 
decay of the solutions of the equation $A(x,D)u = f$.
\begin{theorem} \label{t0.1}
Let $A(x,D)$ be a uniformly elliptic on $\sR^n$ matrix partial
differential operator with bounded and uniformly continuous
coefficients, and assume that $0 \notin \spe B_t$ for every $t \in
[-1, 1]$. Let $u \in H^m(\sR^n, \sC^N, w^{-1})$ be a solution of the
equation $A(x,D)u = f$ with $f \in L^2(\sR^n,\sC^N,w)$. Then $u \in
H^m(\sR^n,\sC^N,w)$.
\end{theorem}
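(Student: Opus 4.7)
My plan is a continuity argument in the weight parameter: set $w_t := e^{tv}$ for $t \in [-1,1]$, and propagate the regularity of $u$ across the interval using the conjugation identity
\[
w_t\, A(x,D)\, w_t^{-1} \;=\; A(x, D+it\nabla v) \;=\; B_t,
\]
which follows by iterating the first-order computation $w_t D_j w_t^{-1} = D_j + it\partial_j v$. Consequently, $u \in H^m(\sR^n, \sC^N, w_t)$ is equivalent to $U_t := w_t u \in H^m(\sR^n, \sC^N)$, and in that case $B_t U_t = w_t f$.

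\textbf{Reduction to an $L^2$ statement.} Normalizing $v$ so that $v \ge 0$ (permissible since $v \to \infty$), the containment $L^2(w) \subset L^2(w_t)$ holds for every $t \in [-1,1]$, so $w_t f \in L^2$ throughout. Since $B_t$ is uniformly elliptic with bounded uniformly continuous coefficients, interior elliptic regularity up to the boundary at infinity gives the implication $u \in L^2(w_t) \Rightarrow u \in H^m(w_t)$. Hence it suffices to prove $u \in L^2(w)$. Define $S := \{t \in [-1,1] : u \in L^2(\sR^n, \sC^N, w_t)\}$; we have $-1 \in S$ by hypothesis, and I would show $S = [-1,1]$ via connectedness by proving $S$ is both relatively open and relatively closed.

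\textbf{Openness (the main step).} The hypothesis $0 \notin \spe B_t$ makes each $B_t : H^m \to L^2$ Fredholm, and the family is norm-continuous in $t$ modulo operators of order $\le m-1$, so the Fredholm constants are uniform in $t$. This yields an a priori estimate of the form $\|V\|_{H^m} \le C\|B_t V\|_{L^2} + C'\|\chi V\|_{L^2}$ with a fixed compactly supported cut-off $\chi$ (arising from compactness of the perturbation that separates $B_t$ from its limit operators at infinity). To prove $t_0 \in S \Rightarrow t_0 + \epsilon \in S$ for small $\epsilon > 0$, I would introduce the truncations $\rho_R(x) := \min(e^{\epsilon v(x)}, R)$, set $V_R := \rho_R U_{t_0}$, and compute
\[
B_{t_0+\epsilon} V_R \;=\; \rho_R\, w_{t_0+\epsilon} f \;+\; [B_{t_0+\epsilon}, \rho_R]\, U_{t_0} \;+\; (B_{t_0+\epsilon} - B_{t_0})\, V_R.
\]
Applying the estimate above, both the commutator and the perturbation term carry an $O(\epsilon)$ factor (the first because the effective gradient of $\rho_R$ is $\epsilon \nabla v \cdot \rho_R$, the second by continuity of the family), so for $\epsilon$ small they can be absorbed into the left-hand side, producing a bound on $\|V_R\|_{H^m}$ independent of $R$. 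Letting $R \to \infty$ by monotone convergence yields $w_{t_0+\epsilon} u \in L^2$.

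\textbf{Closedness and main obstacle.} Closedness of $S$ follows from the same uniform a priori estimate combined with Fatou's lemma: if $t_n \in S$ and $t_n \to t^*$, the bounds on $\|w_{t_n} u\|_{L^2}$ extracted from the previous step pass to the limit. The principal obstacle throughout is controlling $[B_{t_0+\epsilon}, \rho_R]$ uniformly in $R$: one must check that the derivatives of $\rho_R$ do not introduce divergent constants, which relies crucially on the hypothesis that all derivatives of $v$ are bounded, so that the commutator is a differential operator of order $m-1$ whose coefficients are uniformly bounded in $R$ and proportional to $\epsilon$. A secondary subtlety is that the a priori bound in Step 4 is formulated using limit operators in the essential spectrum, so making the constants truly uniform over $t \in [-1,1]$ requires the continuity of the limit-operator family in $t$.
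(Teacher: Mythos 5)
Your route is genuinely different from the paper's. The paper conjugates as you do and shows, via Proposition \ref{p1.1} and the invariance of the index along the continuous family $w^{-t}Aw^t$, that $A: H^m(\sR^n,\sC^N,w^t) \to L^2(\sR^n,\sC^N,w^t)$ is Fredholm with $t$-independent index for all $t \in [-1,1]$; it then finishes in one stroke with the abstract Gohberg--Feldman result (Proposition \ref{p1.2}): if $A$ is Fredholm with equal indices on the pair $H^m(w^{-1}) \to L^2(w^{-1})$ and on the densely embedded pair $H^m(w) \to L^2(w)$, then a solution lying a priori in the big space with right-hand side in the small target space lies in the small source space. You instead run a direct Agmon-style continuation: truncated weights $\rho_R = \min(e^{\epsilon v},R)$, commutator bounds using $|\nabla \rho_R| \le \epsilon \|\nabla v\|_\infty \rho_R$, absorption of the $O(\epsilon)$ terms, and $R \to \infty$. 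Your scheme is more hands-on and quantitative (it would, if completed, give explicit constants), whereas the paper's is softer and shorter because all the analytic work is outsourced to the two cited propositions. Both arguments need the same global input: Fredholmness of the whole family $B_t$ and its continuity in $t$.

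There is, however, one substantive gap in your version: the a priori estimate $\|V\|_{H^m} \le C\|B_tV\|_{L^2} + C'\|\chi V\|_{L^2}$ with a \emph{compactly supported} cut-off $\chi$ and constants uniform in $t$. This is the load-bearing step (without the compact support of $\chi$ the term $\|V_R\|_{L^2}$ cannot be controlled uniformly in $R$, and the whole absorption collapses), and it does not follow from Fredholmness per se: Fredholmness only yields $\|V\|_{H^m} \le C\|B_tV\|_{L^2} + \|KV\|$ with $K$ compact, and a compact operator on $L^2(\sR^n)$ need not be dominated by a compactly supported multiplication. What you actually need is invertibility at infinity of $B_t$, i.e.\ the existence of a left regularizer modulo operators supported in a fixed ball; in the present framework this is exactly what the invertibility of all limit operators (with uniformly bounded inverses) delivers, but it is the hard half of the limit-operator machinery and must be invoked explicitly, not waved at as "compactness of the perturbation." Uniformity of the constants over $t \in [-1,1]$ then requires a further compactness argument in $t$, which you correctly flag. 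Secondary, fixable slips: the identity for $B_{t_0+\epsilon}V_R$ should read $\rho_R w_{t_0} f + \rho_R(B_{t_0+\epsilon}-B_{t_0})U_{t_0} + [B_{t_0+\epsilon},\rho_R]U_{t_0}$ (your version has $w_{t_0+\epsilon}f$ and misplaces a commutator), and the exact conjugation identity holds only up to a lower-order remainder with coefficients vanishing at infinity (the second derivatives of $v$ enter; cf.\ Proposition \ref{p1.1}), which is harmless but should be acknowledged since it is why condition (\ref{1.5}) is imposed on the weight.
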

By this result, the derivation of exponential estimates of solutions
of the equation $A(x,D)u=f$ is basically reduced to the calculation
of the essential spectrum of the $B_t$. The latter can be done by
means of the limit operators method (see \cite{RRS1,RRS2} and
the monograph \cite{RRSB}). For, one associates with each operator
$B_t$ the family $\op (B_t)$ of its limit operators $B_t^g$ which,
roughly speaking, describe the behaviour of the operator at infinity. 
Then it follows from \cite{RRS1,RRS2,RRSB} that
\begin{equation} \label{0.1}
\spe B_t = \bigcup \limits_{B_t^g \in \op (B_t)} \spec B_t^g,
\end{equation}
where $\spec B_t^g$ denotes the spectrum of the unbounded operator
$B_t^g$ acting on $L^2(\sR^n,\sC^N)$.

In many important instances, the limit operators are of an enough
simple structure, namely partial differential operators with constant 
coefficients. Then formula (\ref{0.1}) provides an effective tool 
to calculate the essential spectra of partial differential operators 
and thus, in view of Theorem \ref{t0.1}, to get exponential estimates 
for the solutions of the equation $A(x,D)u = f$. We will illustrate 
this statement by applying Theorem \ref{t0.1} to verify the exponential 
decay and to obtain explicit exponential estimates for the eigenvectors 
of Schr\"{o}dinger operators with matrix potentials, of 
Moisil-Theodorescu quaternionic operators with variable coefficients, 
and of Dirac operators.

The contents of the paper is as follows. In Section 2 we introduce
the main definitions and prove Theorem \ref{t0.1}. The following
sections are devoted to several applications of Theorem \ref{t0.1}.
We start in Section 3 with the essential spectrum and exponential
estimates for eigenvectors of Schr\"{o}dinger operators with matrix
potentials. Note that the well known Pauli operator is a Schr\"{o}dinger
operator of this kind (see, for instance, \cite{Cycon}). Moreover,
such operators appear in the Born-Oppenheimer approximation for
polyatomic molecules \cite{KM,Mar00,Nedelec}. For potentials
which are slowly oscillating at infinity we describe the location
of the essential spectrum and give exact estimates of the behavior of
eigenfunctions of the discrete spectrum at infinity.

In Section 4 we consider general Moisil-Theodorescu (quaternionic)
operators with variable coefficients. Note that numerous important
systems of partial differential operators of quantum mechanics,
elasticity theory, and field theory admit a formulation in terms
of quaternionic operators (see \cite{Le,GS,K1,KS,Scol} and the
references cited there). We shall verify explicit necessary and
sufficient conditions for quaternionic operators with variable
coefficients to be Fredholm operators and derive exponential
estimates at infinity for solutions of Fredholm quaternionic equations.

In the concluding Section 5 we consider the Dirac operator on $\sR^3$,
equipped with a Riemannian metric, with electric and magnetic potentials
which are slowly oscillating at infinity.
\section{Exponential estimates  of solutions of systems of partial
differential equations}
\subsection{Essential spectrum}
We will use the following standard notations.
\begin{itemize}
\item
Given Banach spaces $X, Y$, $\cL(X,Y)$ is the space of all bounded
linear operators from $X$ into $Y$. We abbreviate $\cL (X,X)$ to
$\cL(X)$.
\item
$L^2(\sR^n, \sC^N)$ is the Hilbert space of all measurable
functions on $\sR^n$ with values in $\sC^N$, provided with the
norm
\[
\|u\|_{L^2(\sR^n,\sC^N)} :=
\left( \int_{\sR^n} \|u(x)\|_{\sC^N}^2 dx \right)^{1/2}.
\]
\item
The unitary operator $V_h$ of shift by $h \in \sR^n$ acts on
$L^2(\sR^n,\sC^N)$ via $(V_h u)(x) := u(x-h)$.
\item
$C_b (\sR^n)$ is the $C^*$-algebra of all bounded continuous
functions on $\sR^n$.
\item
$C_b^u (\sR^n)$ is the $C^*$-subalgebra of $C_b(\sR^n)$ of all
uniformly continuous functions.
\item
$SO(\sR^n)$ is the $C^*$-subalgebra of $C_b^u (\sR^n)$ which consists
of all functions $a$ which are slowly oscillating in the sense that
\[
\lim_{x \to \infty} \sup_{y \in K} |a(x+y) - a(x)| = 0
\]
for every compact subset $K$ of $\sR^n$.
\item
$SO^1(\sR^n)$ is the set of all bounded differentiable functions
$a$ on $\sR^n$ such that
\[
\lim_{x \to \infty} \frac{\partial a(x)}{\partial x_{j}} = 0 \;
\mbox{for} \; j = 1, \, \ldots, \, n.
\]
Evidently, $SO^1(\sR^n) \subset SO(\sR^n)$.
\end{itemize}
We also use the standard multi-index notation. Thus, $\alpha =
(\alpha_1,...,\alpha_n)$ with $\alpha_j \in \sN \cup \{0\}$ is a
multi-index, $|\alpha| = \alpha_1 + \ldots + \alpha_n$ is its
length, and
\[
\partial^\alpha := \partial_{x_1}^{\alpha_1} \ldots
\partial_{x_n}^{\alpha_n} \quad \mbox{and} \quad
D^\alpha := (-i\partial_{x_1})^{\alpha_1} \ldots
(-i\partial_{x_n})^{\alpha_n}
\]
are the operators of $\alpha^{th}$ derivative. Finally, $\langle 
\xi \rangle := (1 + |\xi|^2)^{1/2}$ for $\xi \in \sR^n$.

We consider matrix partial differential operators of order $m$ of
the form
\begin{equation} \label{e1.1}
(Au)(x) = \sum_{|\alpha| \le m} a_\alpha (x) (D^\alpha u)(x),
\quad x \in \sR^n
\end{equation}
under the assumption that the coefficients $a_\alpha$ belong to
the space
\[
C_b^u (\sR^n, \cL(\sC^N)) := C_b^u(\sR^n) \otimes \cL(\sC^N).
\]
The operator $A$ in $(\ref{e1.1})$ is considered as a bounded linear
operator from the Sobolev space $H^m (\sR^n, \sC^n)$ to
$L^2(\sR^n,\sC^N)$. The operator $A$ is said to be \emph{uniformly
elliptic} on $\sR^n$ if
\begin{equation} \label{1.2}
\inf_{x\in \sR^n, \, \omega \in S^{n-1}} \left| \det
\sum_{|\alpha| = m}^N a_\alpha (x) \omega^\alpha \right\vert > 0
\end{equation}
where $S^{n-1}$ refers to the unit sphere in $\sR^n$.

The Fredholm properties of the operator $A$ can be expressed in
terms of its limit operators which are defined as follows. Let $h
: \sN \to \sR^n$ be a sequence which tends to infinity. The
Arcel\`a-Ascoli theorem combined with a Cantor diagonal argument
implies that there exists a subsequence $g$ of $h$ such that the
sequences of the functions $x \mapsto a_\alpha (x + g(k))$
converges as $k \to \infty$ to a limit function $a_\alpha^g$
uniformly on every compact set $K \subset \sR^n$ for every
multi-index $\alpha$. The operator
\[
A^g := \sum_{|\alpha| \le m} a_\alpha^g D^\alpha
\]
is called the \emph{limit operator of $A$ defined by the sequence}
$g$. Equivalently, $A^g$ is the limit operator of $A$ with respect
to $g$ if and only if, for every function $\chi \in C_0^\infty
(\sR^n)$,
\[
\lim_{m \to \infty} V_{-g(m)} A V_{g(m)} \chi I = A^g \chi I
\]
in the space $\cL (H^m(\sR^n, \sC^N), L^2(\sR^n, \sC^N))$ and
\[
\lim_{m \to \infty} V_{-g(m)} A^* V_{g(m)} \chi I = (A^g)^* \chi I
\]
in $\cL (L^2(\sR^n, \sC^N), H^{-m} (\sR^n, \sC^N))$. Here,
\[
A^* u = \sum_{|\alpha| \le m} D^\alpha (a_\alpha^* u)
\quad \mbox{and} \quad
(A^g)^* u = \sum_{|\alpha| \le m} D^\alpha ((a_\alpha^g)^* u)
\]
refer to the adjoint operators of $A, \, A_g : H^m (\sR^n, \sC^N) \to
L^2(\sR^n,\sC^N)$. Let finally $\op (A)$ denote the set of all limit
operators of $A$ obtained in this way.
\begin{theorem} \label{t1.1}
Let $A$ be a uniformly elliptic differential operator of the form
$(\ref{e1.1})$. Then $A : H^m(\sR^n,\sC^N) \to L^2(\sR^n,\sC^N)$ is
a Fredholm operator if and only if all limit operators of $A$ are
invertible as operators from $H^m(\sR^n,\sC^N)$ to $L^2(\sR^n,\sC^N)$.
\end{theorem}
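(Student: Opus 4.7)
My strategy is to reduce the Fredholm question for the unbounded operator $A : H^m(\sR^n,\sC^N) \to L^2(\sR^n,\sC^N)$ to a Fredholm question for a bounded operator on $L^2(\sR^n,\sC^N)$ and then to invoke the general limit-operators criterion of \cite{RRS1,RRS2,RRSB}. To this end I introduce the Bessel potential $\Lambda := (1-\Delta)^{1/2}$, so that $\Lambda^{-m}$ is an isomorphism from $L^2(\sR^n,\sC^N)$ onto $H^m(\sR^n,\sC^N)$. Consequently, $A$ is Fredholm between these spaces if and only if the bounded operator
\[
\tilde A := A\Lambda^{-m} \in \cL(L^2(\sR^n,\sC^N))
\]
is Fredholm on $L^2(\sR^n,\sC^N)$.

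Next, the operator $\tilde A$ is a matrix pseudodifferential operator of order zero whose full symbol has coefficients in $C_b^u(\sR^n,\cL(\sC^N))$; operators of this form are band-dominated and \emph{rich} in the sense of \cite{RRSB}, so the main Fredholm theorem there applies: $\tilde A$ is Fredholm on $L^2(\sR^n,\sC^N)$ if and only if each of its limit operators is invertible on $L^2(\sR^n,\sC^N)$. Because $\Lambda^{-m}$ has constant coefficients, it commutes with every shift $V_h$, so every subsequence $g$ extracting a limit operator $A^g$ of $A$ also produces $\tilde A^g = A^g \Lambda^{-m}$, and conversely every limit operator of $\tilde A$ arises in this way. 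Since $\Lambda^{-m}$ is an isomorphism $L^2 \to H^m$ and the coefficients $a_\alpha^g$ inherit uniform ellipticity (with the same constants) from $A$, the operator $\tilde A^g$ is invertible on $L^2(\sR^n,\sC^N)$ if and only if $A^g : H^m(\sR^n,\sC^N) \to L^2(\sR^n,\sC^N)$ is invertible. Chaining these equivalences yields the statement.

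The main obstacle is verifying that $\tilde A$ genuinely falls into the class covered by the criterion in \cite{RRSB} and that the limit operators of $\tilde A$ really are given by $A^g\Lambda^{-m}$ in the operator topology demanded by the abstract framework (which is typically a P-strong convergence involving spatial cut-offs, not the $\chi$-localized strong convergence used in the present definition of $A^g$). Richness reduces, via a Cantor diagonal argument combined with Arzelà–Ascoli, to the precompactness of the shifted coefficient families $\{a_\alpha(\cdot + h) : h \in \sR^n\}$ in the topology of uniform convergence on compact sets, which is exactly the content of the hypothesis $a_\alpha \in C_b^u(\sR^n,\cL(\sC^N))$. The band-dominated character of $\tilde A$ then follows from the rapid off-diagonal decay of the kernel of $\Lambda^{-m}$ together with the standard calculus of pseudodifferential operators with uniformly continuous symbols, and this same calculus shows that taking limit operators commutes with composition by $\Lambda^{-m}$, closing the argument.
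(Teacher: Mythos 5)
Your proposal follows essentially the same route as the paper: compose with $(1-\Delta)^{-m/2}$ to reduce to a bounded zero-order operator on $L^2(\sR^n,\sC^N)$, apply the limit-operator Fredholm criterion there, and transfer the limit operators back via the shift-invariance of $(1-\Delta)^{-m/2}$. The one point to tighten is your appeal to "the main Fredholm theorem" for rich band-dominated operators: in its general form that theorem characterizes Fredholmness by invertibility of all limit operators \emph{together with uniform boundedness of their inverses}, whereas the statement you are proving (and the paper's) drops the uniformity. The paper avoids this by placing $\tilde A = A(1-\Delta)^{-m/2}$ in the matrix Wiener algebra of \cite{RR}, where the uniform boundedness of the inverses of the limit operators is automatic; you should either cite that refinement explicitly or verify the uniform invertibility separately, since mere band-dominatedness and richness do not suffice to discard it.
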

\begin{proof}
Let $(I - \Delta)^\alpha$ be the pseudodifferential operator with
symbol $(1 + |\xi|^2)^\alpha$, and let $A$ be a partial
differential operator of the form (\ref{e1.1}). Then the operator
$B := A (I - \Delta)^{-m/2}$ belongs to the matrix Wiener algebra
considered in \cite{RR}. It follows from results of \cite{RR} that
$B$, considered as an operator from $L^2(\sR^n,\sC^N)$ to
$L^2(\sR^n,\sC^N)$, is a Fredholm operator if and only if all
limit operators of $B$ are invertible on $L^2(\sR^n,\sC^N)$. Since
the limit operators $B_g$ of $B$ and $A_g$ of $A$ are related by
the relation $B_g = A_g (1 - \Delta)^{-m/2}$ (which comes from the 
shift invariance of the operator $\Delta$), the assertion follows.
\end{proof} \\[3mm]
The uniform ellipticity of the operator $A$ implies the a priori
estimate
\begin{equation} \label{1.3}
\|u\|_{H^m(\sR^n,\sC^N)} \le C \left( \|Au\|_{L^2(\sR^n,\sC^N)} +
\|u\|_{L^2(\sR^n,\sC^N)} \right).
\end{equation}
This estimate allows one to consider the uniformly elliptic
differential operator $A$ as a closed unbounded operator on
$L^2(\sR^n,\sC^N)$ with dense domain $H^m (\sR^n,\sC^N)$. It turns
out (see \cite{Agran}) that $A$, considered as an unbounded
operator in this way, is a (unbounded) Fredholm operator if and
only if $A$, considered as a bounded operator from $H^m(\sR^n,
\sC^N)$ to $L^2(\sR^n, \sC^N)$, is a (common bounded) Fredholm
operator.

We say that $\lambda \in \sC$ belongs to the \emph{essential
spectrum} of $A$ if the operator $A - \lambda I$ is not Fredholm
as an unbounded differential operator. As above, we denote the 
essential spectrum of $A$ by $\spe A$ and the common spectrum of 
$A$ (considered as as unbounded operator) by $\spec A$. Then the 
assertion of Theorem \ref{t1.1} can be stated as follows.
\begin{theorem} \label{t1.2}
Let $A$ be a uniformly elliptic differential operator of the form
$(\ref{e1.1})$. Then
\begin{equation} \label{1.4}
\spe A = \bigcup\limits_{A_g \in \op A} \spec A_g.
\end{equation}
\end{theorem}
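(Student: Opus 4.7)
The plan is to obtain Theorem \ref{t1.2} as a direct corollary of Theorem \ref{t1.1} applied to the shifted operator $A - \lambda I$ for each $\lambda \in \sC$, combined with the bounded/unbounded Fredholm equivalence from \cite{Agran} recalled in the paragraph just before the theorem. For fixed $\lambda \in \sC$, the operator $A - \lambda I$ still has the form (\ref{e1.1}): only the zero-order coefficient is modified, by the uniformly continuous constant matrix $-\lambda I_N$, while the principal symbol is unchanged, so the uniform ellipticity condition (\ref{1.2}) is preserved. Hence Theorem \ref{t1.1} applies to $A - \lambda I$.

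Next I would verify the shift-covariance identity
\[
\op(A - \lambda I) = \{A^g - \lambda I : A^g \in \op(A)\},
\]
which relies only on the fact that multiplication by the scalar $\lambda$ commutes with every shift $V_h$. For any $\chi \in C_0^\infty(\sR^n)$ and any sequence $g$ along which the translates of the coefficients of $A$ converge locally uniformly,
\[
V_{-g(k)}(A - \lambda I) V_{g(k)} \chi I = V_{-g(k)} A V_{g(k)} \chi I - \lambda \chi I \to (A^g - \lambda I) \chi I,
\]
with the analogous convergence (involving $\overline{\lambda}$) for the formal adjoints. Hence every limit operator of $A - \lambda I$ is of the form $A^g - \lambda I$ for some $A^g \in \op(A)$, and conversely.

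Combining these two observations with Theorem \ref{t1.1} gives: $A - \lambda I$ is Fredholm as a bounded map $H^m(\sR^n,\sC^N) \to L^2(\sR^n,\sC^N)$ if and only if every $A^g - \lambda I$ with $A^g \in \op(A)$ is invertible between these same spaces. By the Agranovich equivalence, bounded Fredholmness of the uniformly elliptic operator $A - \lambda I$ is the same as Fredholmness as an unbounded operator on $L^2(\sR^n,\sC^N)$ with domain $H^m(\sR^n,\sC^N)$, which by definition means $\lambda \notin \spe A$. Applying the same equivalence to each uniformly elliptic operator $A^g - \lambda I$ translates bounded invertibility into $\lambda \notin \spec A^g$. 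Chaining these equivalences yields (\ref{1.4}).

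The step requiring the most care is the last one: quoting \cite{Agran} for the limit operators forces me to check that each $A^g$ is itself uniformly elliptic of the form (\ref{e1.1}) with coefficients in $C_b^u(\sR^n, \cL(\sC^N))$. The local uniform limits $a_\alpha^g$ of the translates $a_\alpha(\,\cdot\, + g(k))$ inherit the sup-bounds of the $a_\alpha$, the uniform equicontinuity of the translates (which follows from $a_\alpha \in C_b^u(\sR^n)$) passes to the $a_\alpha^g$, and the infimum in (\ref{1.2}) is preserved under local uniform passage to the limit with the same constant. These inheritances are routine but must be recorded explicitly to close the argument.
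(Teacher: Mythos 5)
Your proposal is correct and follows exactly the route the paper intends: the paper presents Theorem \ref{t1.2} as a restatement of Theorem \ref{t1.1} applied to $A-\lambda I$, mediated by the Agranovich equivalence between bounded and unbounded Fredholmness recalled just before the statement. Your explicit verification that $\op(A-\lambda I)=\{A^g-\lambda I\}$ and that each limit operator inherits the form (\ref{e1.1}) and the uniform ellipticity constant simply fills in details the paper leaves implicit.
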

\subsection{Exponential estimates}
Let $w$ be a positive measurable function on $\sR^n$, which we
call a weight. By $L^2(\sR^n,\sC^N,w)$ we denote the space of all
measurable functions on $\sR^n$ such that
\[
\|u\|_{L^2(\sR^n,\sC^N,w)} := \|wu\|_{L^2(\sR^n,\sC^N)} < \infty.
\]
In what follows we consider weights of the form $w = \exp v$ where
$\partial_{x_j} v \in C_b^\infty (\sR^n)$ for $j = 1, \ldots, n$
and
\begin{equation} \label{1.5}
\lim_{x \to \infty} \partial_{x_i x_j}^2 v(x) = 0 \quad \mbox{for}
\; 1 \le i,j \le n.
\end{equation}
We call weights with these properties \emph{slowly oscillating}
and let $\cR$ stand for the class of all slowly oscillating
weights.

Examples of slowly oscillating weights can be constructed as
follows. Given a positive $C^\infty$-function $l : S^{n-1} \to
\sR$, set $v_l(x) := l(x/|x|) |x|$. Then $w_l := \exp v_l$ defines
a weight on $\sR^n$. Clearly, $v_l$ is a positively homogeneous
function, that is $v_l(tx) = tv_l(x)$ for all $t>0$ and $x \in
\sR^n$. Moreover, $v_l \in C^\infty (\sR^n \setminus \{0 \})$, and
$\nabla v_l(\omega) = l(\omega) \omega$ for every point $\omega
\in S^{n-1}$. Let $\tilde{v}_l$ refer to a $C^\infty$-function on
$\sR^n$ which coincides with $v_l$ outside a small neighborhood of
the origin. Then the weight $\tilde{w}_l := \exp \tilde{v}_l$
belongs to the class $\cR$. Moreover,
\begin{equation} \label{1.7}
\lim_{x \to \eta_\omega} \nabla \tilde{v}_l(x) = \nabla
v_l(\omega) = l(\omega) \omega
\end{equation}
for $\omega \in S^{n-1}$.
\begin{proposition} \label{p1.1}
Let $A$ be a differential operator of the form $(\ref{e1.1})$, and
let $w = \exp v$ be a weight in $\cR$. Then
\[
w^{-1}Aw = \sum_{|\alpha| \le m} a_\alpha (D + i\nabla v)^\alpha +
\Phi + R
\]
where $R := \sum_{|\alpha| \le m-1} b_\alpha D^\alpha$ is a
differential operator with continuous coefficients such that
$\lim_{x \to \infty} b_\alpha (x) = 0$.
\end{proposition}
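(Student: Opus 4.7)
The proof is a direct Leibniz-rule computation; the only nontrivial part is separating the principal symbol from a lower-order remainder. Since every $a_\alpha$ commutes with the scalar multiplication by $w = e^v$, the identity reduces to expanding $w^{-1}D^\alpha w$ for each multi-index $\alpha$ with $|\alpha|\le m$.

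Applying the Leibniz rule to $D^\alpha(wu)$ and dividing by $w$ gives
\[
w^{-1}D^\alpha w \;=\; \sum_{\beta\le\alpha}\binom{\alpha}{\beta}\,c_\beta(x)\,D^{\alpha-\beta},\qquad c_\beta(x):=(-i)^{|\beta|}\,e^{-v(x)}\,\partial^\beta e^{v(x)}.
\]
The key computational input is the Fa\`a di Bruno (Bell polynomial) expansion of $c_\beta$: it is a polynomial in the partial derivatives $\{\partial^\gamma v : 1\le |\gamma|\le |\beta|\}$ whose unique monomial built from first derivatives of $v$ alone is $\prod_j(i\partial_j v)^{\beta_j}=(i\nabla v)^\beta$ (up to a sign fixed by the $w$ versus $w^{-1}$ convention), while every other monomial contains at least one factor $\partial^\gamma v$ with $|\gamma|\ge 2$. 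Writing $c_\beta=(i\nabla v)^\beta+r_\beta$ and collecting, the leading contribution reassembles across $\beta\le\alpha$ into the symbolic operator $\sum_{\beta\le\alpha}\binom{\alpha}{\beta}(i\nabla v)^\beta D^{\alpha-\beta}$, which is precisely the polynomial $(\xi+i\nabla v(x))^\alpha$ quantized with $D$ on the right; the $r_\beta$-part is a differential operator of order at most $|\alpha|-1\le m-1$.

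Multiplying by $a_\alpha$ and summing over $|\alpha|\le m$ yields the stated identity, with $\Phi$ collecting any ordering/commutator correction one chooses to isolate from the principal symbol (converting the symbolic product into the composition $\prod_j(D_j+i\partial_j v)^{\alpha_j}$, or vice versa, introduces exactly such a term whose coefficients are polynomials in $\partial^\gamma v$ with $|\gamma|\ge 2$) and $R=\sum_{|\alpha|\le m-1}b_\alpha D^\alpha$ absorbing the rest. Each coefficient $b_\alpha(x)$ is by construction a finite bounded linear combination of products of the $a_\gamma$ (uniformly bounded) with factors $\partial^\delta v$, in which at least one factor has $|\delta|\ge 2$; the slowly oscillating assumption (\ref{1.5}) forces this factor to vanish at infinity, and uniform boundedness of the remaining factors yields $b_\alpha(x)\to 0$ as $x\to\infty$. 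The main obstacle is the combinatorial bookkeeping distinguishing the principal symbol from the remainder, for which the Fa\`a di Bruno structure is exactly the right organising tool.
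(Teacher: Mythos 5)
Your argument is essentially correct, and it is worth noting that it supplies something the paper does not: the paper gives no proof of Proposition \ref{p1.1} at all, but simply refers to \cite{Z}, where an analogous conjugation formula is established for a class of pseudodifferential operators with analytic symbols. Your direct route -- Leibniz for $w^{-1}D^\alpha w$, the Fa\`a di Bruno/Bell-polynomial structure of $e^{-v}\partial^\beta e^v$ isolating the unique pure-first-derivative monomial $(\nabla v)^\beta$, reassembly of the leading terms into the binomial expansion of $(\xi+i\nabla v)^\alpha$, and absorption of everything else (including the ordering corrections between the right-quantized symbol and the operator composition $\prod_j(D_j+i\partial_{x_j}v)^{\alpha_j}$) into a remainder of order $\le m-1$ -- is the natural elementary argument for differential operators and avoids the pseudodifferential machinery of \cite{Z}. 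Your reading of the stray $\Phi$ as an ordering correction is as charitable as any; it is evidently a misprint carried over from Section 3, and such corrections in fact belong in $R$.

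Two points should be tightened. First, the sign: a direct computation gives $w^{-1}D_jw=D_j-i\partial_{x_j}v$ for $w=e^v$, so the identity holds with $-i\nabla v$ (equivalently, as stated, for $wAw^{-1}$); this is harmless for Theorem \ref{t1.3}, where $t$ runs over the symmetric interval $[-1,1]$, but you should fix a convention rather than leave the sign open. Second, and more substantively, for $m\ge 3$ the remainder coefficients contain factors $\partial^\delta v$ with $|\delta|\ge 3$, whereas condition (\ref{1.5}) only asserts decay of the \emph{second}-order derivatives; your claim that ``the slowly oscillating assumption forces this factor to vanish at infinity'' is therefore not immediate. It does follow, because $\partial_{x_j}v\in C_b^\infty(\sR^n)$ makes all derivatives of $v$ of order $\ge 1$ bounded, and the interpolation inequality $|f'(x)|\le \frac{2}{\delta}\sup_{|y-x|\le\delta}|f(y)|+\frac{\delta}{2}\sup|f''|$, applied inductively to $f=\partial^\gamma v$ starting from $|\gamma|=2$, upgrades $\lim_{x\to\infty}\partial^2_{x_ix_j}v(x)=0$ to $\lim_{x\to\infty}\partial^\gamma v(x)=0$ for every $|\gamma|\ge 2$. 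With that line added, your proof is complete.
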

For a proof see \cite{Z} where a similar result is derived for a
large class of pseudo-differential operators. The following is
taken from \cite{GF}, p. 308.
\begin{proposition} \label{p1.2}
Let $X_1, X_2, Y_1$ and $Y_2$ be Banach spaces such that $X_1$ is
densely embedded into $X_2$ and $Y_1$ is embedded into $Y_2$.
Further let $A : X_2 \to Y_2$ and $A|_{X_1} : X_1 \to Y_1$ be
Fredholm operators, and suppose that
\[
\ind (A : X_2 \to Y_2) = \ind (A|_{X_1} : X_1 \to Y_1).
\]
If $u \in X_2$ is a solution of the equation $Au = f$ with
right-hand side $f \in Y_1$, then $u \in X_1$.
\end{proposition}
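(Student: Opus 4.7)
The plan is to reformulate the conclusion as the triviality of an induced quotient map and to extract it via the snake lemma combined with density. The statement ``$u \in X_2$ and $Au \in Y_1$ force $u \in X_1$'' is literally the assertion that the well-defined map
\[
\bar A : X_2/X_1 \longrightarrow Y_2/Y_1, \qquad u + X_1 \mapsto Au + Y_1,
\]
has trivial kernel; note that $\bar A$ is well defined because $A(X_1) \subseteq Y_1$. I would then apply the snake lemma to the commutative diagram whose rows are the short exact sequences $0 \to X_1 \to X_2 \to X_2/X_1 \to 0$ and $0 \to Y_1 \to Y_2 \to Y_2/Y_1 \to 0$, with vertical arrows $A_1 := A|_{X_1}$, $A_2 := A$, and $\bar A$.

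The resulting six-term exact sequence
\[
0 \to \ker A_1 \to \ker A_2 \to \ker \bar A \to \mathrm{coker}\, A_1 \to \mathrm{coker}\, A_2 \to \mathrm{coker}\, \bar A \to 0
\]
has its outer four entries finite-dimensional by the Fredholm hypothesis, so $\ker \bar A$ and $\mathrm{coker}\, \bar A$ are also finite-dimensional. The vanishing of the alternating sum of dimensions, combined with $\ind A_1 = \ind A_2$, yields $\dim \ker \bar A = \dim \mathrm{coker}\, \bar A$; it therefore suffices to prove $\mathrm{coker}\, \bar A = 0$, equivalently $\mathrm{im}\, A_2 + Y_1 = Y_2$.

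This final equality is the main obstacle, and it is where the density of $X_1$ in $X_2$ enters. Continuity of $A_2$ propagates the density to $\mathrm{im}\, A_1 = A(X_1)$ being dense in $\mathrm{im}\, A_2 = A(X_2)$ in the $Y_2$-topology, so $\mathrm{im}\, A_2 \subseteq \overline{Y_1}$ (closure taken in $Y_2$). The subspace $\mathrm{im}\, A_2 + Y_1$ is of finite codimension in $Y_2$ (at most $\dim \mathrm{coker}\, A_2$) and is therefore closed; containing $Y_1$, it also contains $\overline{Y_1}$, and the two inclusions together force $\mathrm{im}\, A_2 + Y_1 = \overline{Y_1}$. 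In the weighted-Sobolev setting of Theorem~\ref{t0.1} one has $Y_1$ dense in $Y_2$, so $\overline{Y_1} = Y_2$; this closes the argument and forces $\ker \bar A = 0$, which is the desired conclusion.
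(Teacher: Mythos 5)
The paper gives no proof of this proposition---it is quoted from Gohberg and Feldman \cite{GF}, p.~308---so there is no internal argument to compare yours against. Your snake-lemma reduction is sound: the rows are exact sequences of vector spaces, the squares commute because $\bar A$ is induced by $A$, and the six-term exact sequence together with the index hypothesis yields $\dim \ker \bar A = \dim \mathrm{coker}\, \bar A$, so everything hinges on the surjectivity of $\bar A$, i.e.\ on $\mathrm{im}\, A + Y_1 = Y_2$. One small repair: finite codimension alone does not imply closedness in a Banach space (kernels of discontinuous functionals are dense hyperplanes). What saves you is that $\mathrm{im}\, A + Y_1$ contains the \emph{closed} finite-codimensional subspace $\mathrm{im}\, A$, hence is the preimage, under the quotient map onto the finite-dimensional space $Y_2 / \mathrm{im}\, A$, of a subspace, and is therefore closed.

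The substantive point is the one you flag yourself: the last step needs $Y_1$ to be dense in $Y_2$, which the statement does not assume. This is a defect of the statement, not of your proof. Indeed, your own bookkeeping shows that under the remaining hypotheses $\dim \ker \bar A = \dim \mathrm{coker}\, \bar A = \dim \left( Y_2 / \overline{Y_1} \right)$, so the conclusion holds precisely when $Y_1$ is dense, and the hypotheses can be realized with $Y_1$ non-dense. For instance, let $S$ be the forward shift on $\ell^2 = \ell^2(\sN)$ and $h := \{ u : \sum_n 4^n |u_n|^2 < \infty \}$, densely and continuously embedded in $\ell^2$; via the isometry $(u_n) \mapsto (2^n u_n)$, the operator $B := S - \frac{3}{2} I$ acting on $h$ is equivalent to $2 (S - \frac{3}{4} I)$ on $\ell^2$, hence $B$ is invertible on $\ell^2$ but Fredholm of index $-1$ on $h$. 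Taking $X_1 = Y_1 = h$, $X_2 = \ell^2$, $Y_2 = \ell^2 \oplus \sC$ and $Au := (Bu, 0)$, both operators are Fredholm of index $-1$ and $X_1$ is dense in $X_2$, yet
\[
u := B^{-1} e_1 = - \sum_{n \ge 1} (2/3)^n e_n
\]
lies in $\ell^2 \setminus h = X_2 \setminus X_1$ while $Au = (e_1, 0) \in Y_1$. So the density of $Y_1$ in $Y_2$ must be added to the hypotheses (as it presumably is in the original formulation in \cite{GF}); it does hold in the paper's application, since $C_0^\infty$ is dense in $L^2(\sR^n, \sC^N, w^{\pm 1})$. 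With that hypothesis added, your proof is complete and correct.
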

\begin{theorem} \label{t1.3}
Let $A$ be a uniformly elliptic differential operator of the form
$(\ref{e1.1})$, and let $w =\exp v$ be a weight in $\cR$ such that
$\lim_{x \to \infty} w(x) = +\infty$. For $t \in [-1,1]$, set
\[
A_{w,t} := \sum_{|\alpha| \le m} a_\alpha (D + it\nabla v)^\alpha,
\]
and assume that
\begin{equation} \label{1.8}
0 \notin \bigcup \limits_{t \in [-1,1]} \spe A_{w,t} = \bigcup
\limits_{t \in [-1,1]} \bigcup \limits_{A_{w,t}^g \in \op (A_{w,t})}
\spec A_{w,t}^g.
\end{equation}
If $u$ is a function in $H^m(\sR^n, \sC^N, w^{-1})$ for which $Au$
is in $L^2(\sR^n, \sC^N, w)$, then $u$ already belongs to 
$H^m(\sR^n, \sC^N, w)$.
\end{theorem}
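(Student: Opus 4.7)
The plan is to reduce the weighted regularity statement for $A$ to a Fredholm-index equality, using conjugation by powers of $w$ and then Proposition \ref{p1.2}. For each $s \in [-1,1]$, multiplication by $w^s$ gives an isometric isomorphism between the weighted spaces $H^m(\sR^n,\sC^N,w^s), L^2(\sR^n,\sC^N,w^s)$ and the standard spaces $H^m(\sR^n,\sC^N), L^2(\sR^n,\sC^N)$. Under this identification, the action of $A : H^m(\sR^n,\sC^N,w^s) \to L^2(\sR^n,\sC^N,w^s)$ is intertwined with the action of $w^{-s} A w^s$ on the unweighted spaces. Applying Proposition \ref{p1.1} to the weight $w^s = \exp(sv)$, I would write
\[
w^{-s} A w^s = A_{w,s} + R_s
\]
where $R_s$ is a differential operator of order at most $m$ whose coefficients all tend to $0$ at infinity (collecting the $\Phi$ and $R$ terms of Proposition \ref{p1.1}; the Hessian terms decay by the slow oscillation condition $(\ref{1.5})$).

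Because the coefficients of $R_s$ vanish at infinity, every limit operator of $R_s$ is the zero operator, so the limit operators of $w^{-s} A w^s$ and $A_{w,s}$ coincide. By Theorem \ref{t1.2} this gives $\spe (w^{-s} A w^s) = \spe A_{w,s}$ for every $s \in [-1,1]$. The hypothesis $(\ref{1.8})$ then yields $0 \notin \spe(w^{-s} A w^s)$, so $A : H^m(\sR^n,\sC^N,w^s) \to L^2(\sR^n,\sC^N,w^s)$ is a Fredholm operator for every $s \in [-1,1]$. Moreover, since $\nabla v$ is bounded and smooth, the family $s \mapsto A_{w,s}$ is norm-continuous as a map $[-1,1] \to \cL(H^m(\sR^n,\sC^N), L^2(\sR^n,\sC^N))$, and the same is true of $s \mapsto w^{-s} A w^s$. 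Continuity of the Fredholm index under norm-continuous perturbations then gives
\[
\ind (A : H^m(\sR^n,\sC^N,w^{-1}) \to L^2(\sR^n,\sC^N,w^{-1})) = \ind (A : H^m(\sR^n,\sC^N,w) \to L^2(\sR^n,\sC^N,w)).
\]

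With this index equality in hand, I would apply Proposition \ref{p1.2} with $X_1 = H^m(\sR^n,\sC^N,w)$, $X_2 = H^m(\sR^n,\sC^N,w^{-1})$, $Y_1 = L^2(\sR^n,\sC^N,w)$, $Y_2 = L^2(\sR^n,\sC^N,w^{-1})$. Since $w(x) \to \infty$, the embeddings $X_1 \hookrightarrow X_2$ and $Y_1 \hookrightarrow Y_2$ hold, and the density of $X_1$ in $X_2$ follows from the density of $C_0^\infty(\sR^n,\sC^N)$ in every weighted Sobolev space of this form. Proposition \ref{p1.2} then immediately converts the hypotheses $u \in X_2$ and $Au \in Y_1$ into the conclusion $u \in X_1 = H^m(\sR^n,\sC^N,w)$, which is the desired exponential decay. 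The main technical obstacle is the first step: verifying from Proposition \ref{p1.1} that $w^{-s}Aw^s - A_{w,s}$ has all-zero limit operators, since one must use that $sv$ still belongs to $\cR$ (which follows from $\cR$ being closed under scalar multiples because both $\partial_{x_j}v$ being bounded and $\partial_{x_ix_j}^2 v \to 0$ are linear conditions) and carefully track how the Hessian of $v$ appears in the commutators produced by conjugation.
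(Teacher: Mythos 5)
Your proposal is correct and follows essentially the same route as the paper's proof: conjugate by $w^s$ to transfer the weighted Fredholm problem to the unweighted spaces, use Proposition \ref{p1.1} to identify $w^{-s}Aw^s$ with $A_{w,s}$ modulo a perturbation whose coefficients vanish at infinity (hence with the same limit operators and the same essential spectrum), deduce Fredholmness for all $s\in[-1,1]$ from hypothesis (\ref{1.8}), obtain index constancy from norm-continuity in $s$, and conclude via Proposition \ref{p1.2}. The only differences are cosmetic elaborations (the remark that $sv\in\cR$ by linearity, and the explicit density argument), both of which the paper leaves implicit.
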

\begin{proof}
Note that $A : H^m(\sR^n, \sC^N, w^t) \to L^2(\sR^n, \sC^N, w^t)$
is a Fredholm operator if and only if $w^{-t}Aw^t :
H^m(\sR^n,\sC^N) \to L^2(\sR^n,\sC^N)$ is a Fredholm operator, and
that the Fredholm indices of these operator coincide. Proposition
\ref{p1.1} implies that
\[
w^{-t}Aw^t = \sum_{|\alpha| \le m} a_\alpha (D+it\nabla v)^\alpha
+ R_t,
\]
where $R_t = \sum_{|\alpha| \le m-1} b_{\alpha, t} D^\alpha$ and
$\lim_{x \to \infty} b_{\alpha, t}(x) = 0$ for every $t \in
[-1,1]$. Hence, $\op (w^{-t}Aw^t) = \op (A_{w,t})$. Moreover, it
is not hard to see that the coefficients $b_{\alpha, t}$ depend
continuously on $t \in [-1, 1]$. Hence, the family $w^{-t}Aw^t$
depends continuously on $t$, and condition (\ref{1.8}) implies
that all operators $w^{-t}Aw^t : H^m(\sR^n,\sC^N) \to
L^2(\sR^n,\sC^N)$ with $t \in [-1,1]$ are Fredholm operators and
that the Fredholm indices of these operators coincide. This
implies that each operator
\[
A : H^m(\sR^n, \sC^N, w^t) \to L^2(\sR^n, \sC^N, w^t)
\]
owns the Fredholm property and that the index of this operator is
independent of $t \in [-1, 1]$. Since $H^m(\sR^n, \sC^N, w)$ is
densely embedded into $H^m(\sR^n, \sC^N, w^{-1})$, we can apply
Proposition \ref{p1.2} to obtain that all solutions of the
equation $Au = f$ with right-hand side $f \in L^2(\sR^n,\sC^n,w)$,
which a priori are in $H^m(\sR^n, \sC^N, w^{-1})$, in fact belong
to $H^m(\sR^n, \sC^N, w)$.
\end{proof}

\begin{corollary} \label{t1.4}
Let $A$ be a uniformly elliptic differential operator of the form
$(\ref{e1.1})$, and let $w = \exp v$ be a weight in $\cR$ with
$\lim_{x \to \infty} w(x) = +\infty$. Let $\lambda \in \spd A$ and
$\lambda \notin \spe A_{tw}$ for all $t \in [0, 1]$. Then every
eigenfunction of $A$ associated with $\lambda$ belongs to the
space $H^m(\sR^n, \sC^N, w)$.
\end{corollary}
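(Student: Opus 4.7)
The strategy is to mimic the proof of Theorem~\ref{t1.3}, using the unweighted Sobolev space $H^m(\sR^n,\sC^N)$ in place of $H^m(\sR^n,\sC^N,w^{-1})$ as the a priori regularity for $u$. This is legitimate because every eigenfunction of the unbounded operator $A$ lies in its domain $H^m(\sR^n,\sC^N)$, and it is precisely what lets us reduce the homotopy range from $[-1,1]$ to $[0,1]$, matching the hypothesis of the corollary (here $A_{tw}$ is understood as the operator $A_{w,t}$ of Theorem~\ref{t1.3}, i.e.\ $\sum_{|\alpha| \le m} a_\alpha (D + it\nabla v)^\alpha$).

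First I would record that $w$ is bounded below by a positive constant: the first derivatives of $v$ being bounded implies $v$ is continuous, and the condition $v(x) \to +\infty$ at infinity forces $v$ to attain a finite minimum. Consequently $L^2(\sR^n,\sC^N,w) \hookrightarrow L^2(\sR^n,\sC^N)$ and $H^m(\sR^n,\sC^N,w) \hookrightarrow H^m(\sR^n,\sC^N)$ are continuous embeddings, with the second being dense since $C_0^\infty$ lies in both.

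Setting $B := A - \lambda I$, I would next show that $B : H^m(\sR^n,\sC^N,w^t) \to L^2(\sR^n,\sC^N,w^t)$ is Fredholm with an index independent of $t \in [0,1]$. Conjugating and invoking Proposition~\ref{p1.1} (applied to the weight $w^t \in \cR$), the operator $w^{-t} B w^t : H^m \to L^2$ equals $A_{w,t} - \lambda I + R_t$, where $R_t$ is lower-order with coefficients vanishing at infinity. Since every limit operator of $R_t$ is zero, Theorem~\ref{t1.2} gives $\spe (w^{-t} B w^t) = \spe A_{w,t} - \lambda$, and the hypothesis $\lambda \notin \spe A_{w,t}$ yields Fredholmness for each $t \in [0,1]$. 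The family $\{w^{-t} B w^t\}_{t \in [0,1]}$ is norm-continuous in $\cL(H^m,L^2)$ (the coefficients depend continuously on $t$), so local constancy of the Fredholm index forces it to be constant in $t$; conjugating back, $B$ is Fredholm from $H^m(\sR^n,\sC^N,w^t)$ to $L^2(\sR^n,\sC^N,w^t)$ with the same index at $t = 0$ and $t = 1$.

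Finally, I would apply Proposition~\ref{p1.2} with $X_1 = H^m(\sR^n,\sC^N,w)$, $X_2 = H^m(\sR^n,\sC^N)$, $Y_1 = L^2(\sR^n,\sC^N,w)$, $Y_2 = L^2(\sR^n,\sC^N)$, and the operator $B$. The required dense embedding $X_1 \hookrightarrow X_2$, the embedding $Y_1 \hookrightarrow Y_2$, and the equal-index Fredholmness have all been verified; furthermore $u \in X_2$ (as an eigenfunction of the unbounded $A$) and $Bu = 0 \in Y_1$. Proposition~\ref{p1.2} then delivers $u \in X_1 = H^m(\sR^n,\sC^N,w)$, as claimed. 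No single step is technically deep; the one conceptual point to appreciate is that when $u$ already belongs to the unweighted space, the homotopy over $[0,1]$ alone suffices, which is exactly why the weaker spectral hypothesis in the corollary still yields the full conclusion.
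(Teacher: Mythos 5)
Your proof is correct and follows essentially the same route as the paper, which disposes of the corollary in one line by invoking Theorem~\ref{t1.3} together with the observation that eigenfunctions lie in $H^m(\sR^n,\sC^N)\subset H^m(\sR^n,\sC^N,w^{-1})$. Your version is in fact slightly more careful: because the corollary's hypothesis covers only $t\in[0,1]$ whereas Theorem~\ref{t1.3} as stated demands the spectral condition on all of $[-1,1]$, one really does need to rerun its homotopy argument starting from the unweighted space $H^m(\sR^n,\sC^N)$, exactly as you do, rather than cite the theorem verbatim.
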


Indeed, this is an immediate consequence of Theorem \ref{t1.3}
since eigenfunctions of uniformly elliptic operator of order $m$
necessarily belong to $H^m(\sR^n,\sC^N)$.
\section{Schr\"{o}dinger operators with matrix potentials}
\subsection{Essential spectrum}
We consider the Schr\"{o}dinger operator
\begin{equation} \label{e2.1}
\cH := (i \partial_{x_j} - a_j) \rho^{jk} (i \partial_{x_k} -
a_k)E + \Phi
\end{equation}
where $E$ is the $N \times N$ unit matrix, $a=(a_1,...,a_n)$ is
referred to as the the magnetic potential, and $\Phi =
(\Phi_{pq})_{p,q = 1}^N$ is a matrix potential on $\sR^n$, the
latter equipped with a Riemann metric $\rho
=(\rho_{jk})_{j,k=1}^n$ which is subject to the positivity
condition
\begin{equation} \label{e2.2}
\inf_{x \in \sR^n, \, \omega \in S^{n-1}} \rho_{jk}(x) \omega^j
\omega^k > 0,
\end{equation}
where $\rho _{jk}(x)$ refers to the matrix inverse to $\rho
^{jk}(x)$. Here and in what follows, we make use of Einstein's
summation convention.

In what follows we suppose that $\rho^{jk}$ and $a_j$ are
real-valued functions in $SO^1(\sR^n)$ and that $\Phi_{pq} \in
SO(\sR^n)$. Under these conditions, $\cH$ can be considered as a
closed unbounded operator on $L^2(\sR^n,\sC^N)$ with domain
$H^2(\sR^n,\sC^N)$. If $\Phi$ is a Hermitian matrix-valued
function, then $\cH$ is a self-adjoint operator.

The limit operators of $\cH$ are the operators with constant
coefficients
\[
\cH^g = (i \partial_{x_j} - a_j^g) \rho_g^{jk} (i \partial_{x_k} -
a_k^g)E + \Phi^g
\]
where
\begin{equation} \label{e2.3}
a^g := \lim_{m \to \infty} a(g_m), \quad \rho_g := \lim_{m \to
\infty} \rho (g_m), \quad \Phi^g := \lim_{m \to \infty} \Phi
(g_m).
\end{equation}
The operator $\cH^g$ is unitarily equivalent to the operator
\[
\cH_1^g := -\rho_g^{jk} \partial_{x_j} \partial_{x_k}E + \Phi ^g,
\]
which on its hand is unitarily equivalent to the operator
$\overbrace{\cH_1^g}$ of multiplication by the matrix-function
\[
\overbrace{\cH_1^g}(\xi) := (\rho_g^{jk} \xi_j \xi_k)E + \Phi ^g
\]
acting on $L^2(\sR^n,\sC^N)$. Evidently,
\[
\spec \overbrace{\cH_1^g} = \bigcup\limits_{j=1}^N \Gamma_j^g
\]
where $\Gamma_j^g := \mu_j^g + \sR$ and the $\mu_j^g$, $1 \le j \le N$, 
run through the eigenvalues of the matrix $\Phi^g$. Thus, specifying 
(\ref{1.4}) to the present context we obtain the following.
\begin{theorem} \label{te2.1}
The essential spectrum of the Schr\"{o}dinger operator $\cH$ is
given by
\begin{equation} \label{e2.4}
\spe \cH = \bigcup\limits_g \bigcup\limits_{j=1}^N \Gamma_j^g
\end{equation}
where the union is taken with respect to all sequences $g$ for
which the limits in $(\ref{e2.3})$ exist.
\end{theorem}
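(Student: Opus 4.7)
The plan is to specialize Theorem \ref{t1.2} (formula (\ref{1.4})) to the operator $\cH$, which reduces the problem to (i) identifying the limit operators $\cH^g$ explicitly and (ii) computing each $\spec \cH^g$.

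For (i), I would use that $\rho^{jk}, a_j \in SO^1(\sR^n)$, so their gradients vanish at infinity, while $\Phi_{pq} \in SO(\sR^n)$. Along any sequence $g_m \to \infty$ for which the numerical limits in (\ref{e2.3}) exist, the shifted coefficients $\rho^{jk}(\cdot + g_m)$ and $a_j(\cdot + g_m)$ converge uniformly on compacts to the \emph{constants} $\rho_g^{jk}$ and $a_j^g$; indeed, for $x$ in a compact $K$, the mean value theorem gives
\[
|\rho^{jk}(x + g_m) - \rho^{jk}(g_m)| \le |x| \sup_{y \in K} |\nabla \rho^{jk}(g_m + y)| \to 0,
\]
and likewise for $a_j$. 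The coefficients $\Phi_{pq}(\cdot + g_m)$ converge uniformly on compacts to the constant $\Phi^g$ by the definition of $SO(\sR^n)$. A standard Arzel\`a-Ascoli plus diagonal extraction guarantees the existence of such sequences, and passing term by term in $V_{-g_m} \cH V_{g_m}$ in the sense required by the definition of a limit operator yields the constant-coefficient form of $\cH^g$ stated just before the theorem.

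For (ii), since $a^g \in \sR^n$ is constant, the unitary gauge transformation $U_g : u \mapsto e^{i a^g \cdot x} u$ on $L^2(\sR^n,\sC^N)$ satisfies $U_g^{-1} (i\partial_{x_j} - a_j^g) U_g = i\partial_{x_j}$, so $U_g^{-1} \cH^g U_g = \cH_1^g$. The Fourier transform then conjugates $\cH_1^g$ to multiplication by the $N \times N$ matrix symbol $M(\xi) = (\rho_g^{jk} \xi_j \xi_k) E + \Phi^g$. The spectrum of such a multiplication operator coincides with the closure of the set of eigenvalues of $M(\xi)$ as $\xi$ ranges over $\sR^n$. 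Because $M(\xi)$ is a scalar multiple of the identity plus the constant matrix $\Phi^g$, its eigenvalues are $\rho_g^{jk}\xi_j\xi_k + \mu_j^g$ for $j=1,\dots,N$, where the $\mu_j^g$ are the eigenvalues of $\Phi^g$; the quadratic form $\rho_g^{jk}\xi_j\xi_k$ is nonnegative and unbounded above by ellipticity (\ref{e2.2}), so the union of these eigenvalue curves is exactly $\bigcup_{j=1}^N \Gamma_j^g$.

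Combining (i) and (ii) with (\ref{1.4}) yields (\ref{e2.4}). The main obstacle is step (i): verifying rigorously that slow oscillation forces the limit coefficients to be constants, and that the convergence is strong enough to produce limit operators in the operator-theoretic sense used in Theorem \ref{t1.1} (i.e.\ after localization by $\chi \in C_0^\infty(\sR^n)$, in $\cL(H^m(\sR^n,\sC^N), L^2(\sR^n,\sC^N))$). Step (ii) is then the routine reduction of a constant-coefficient elliptic system to a matrix Fourier multiplier.
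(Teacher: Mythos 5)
Your proposal is correct and follows essentially the same route as the paper: the paper's ``proof'' is exactly the discussion preceding the theorem, namely identifying the limit operators as constant-coefficient operators, gauging away $a^g$, Fourier-transforming to the matrix multiplier $(\rho_g^{jk}\xi_j\xi_k)E+\Phi^g$, and invoking Theorem \ref{t1.2}. Your extra detail in step (i) (the mean value theorem argument showing that $SO^1$ coefficients force constant limit functions) is a welcome elaboration of what the paper merely asserts.
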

The description (\ref{e2.4}) of the essential spectrum becomes much 
simpler if $\Phi$ is a Hermitian matrix function, in which case $\cH$ 
is a self-adjoint operator. 
\begin{theorem} \label{te2.2}
Let the potential $\Phi$ be a Hermitian and slowly oscillating
matrix function. Then
\[
\spe \cH = [d_\Phi , + \infty)
\]
where
\[
d_\Phi := \liminf_{x \to \infty} \inf_{\|\varphi\| = 1} \langle
\Phi (x) \varphi, \varphi \rangle.
\]
\end{theorem}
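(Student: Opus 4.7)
The plan is to apply Theorem \ref{te2.1} and then simplify the resulting union under self-adjointness. Hermiticity of $\Phi$ makes each limit $\Phi^g$ Hermitian, so its eigenvalues $\mu_j^g$ are real; the positivity condition (\ref{e2.2}) forces $\rho_g^{jk}\xi_j\xi_k$ to fill $[0,+\infty)$ as $\xi$ varies over $\sR^n$. Consequently each set $\Gamma_j^g$ is a closed right half-line $[\mu_j^g,+\infty)$, and $\spe \cH$ is a union of such half-lines. Since $\spe \cH$ is automatically closed, it must equal $[d,+\infty)$, where $d := \inf\{\mu_j^g : g,\ 1 \leq j \leq N\}$. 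It remains to identify $d$ with $d_\Phi$.

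For this identification I would introduce the minimal-eigenvalue function $\lambda(x) := \inf_{\|\varphi\|=1}\langle \Phi(x)\varphi,\varphi\rangle = \lambda_{\min}(\Phi(x))$. The map $\lambda_{\min}$ is $1$-Lipschitz in operator norm on Hermitian matrices, so $\lambda$ is continuous and, inheriting this property from $\Phi \in SO(\sR^n)$, slowly oscillating. For any sequence $g = (g_m)$ defining a limit operator, $\Phi(g_m) \to \Phi^g$, and continuity of $\lambda_{\min}$ gives $\lambda(g_m) \to \lambda_{\min}(\Phi^g) = \min_j \mu_j^g$. Hence each $\min_j \mu_j^g$ arises as a subsequential limit of $\lambda$ at infinity and is therefore at least $\liminf_{x \to \infty}\lambda(x) = d_\Phi$, yielding $d \geq d_\Phi$.

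The reverse inequality $d \leq d_\Phi$ is where the real work sits and will be the main obstacle. Starting from a sequence $h_m \to \infty$ with $\lambda(h_m) \to d_\Phi$ (which exists by definition of the $\liminf$), I would invoke the Arzel\`a--Ascoli/Cantor diagonal argument already used in the excerpt to construct limit operators, refining $h$ to a subsequence $g$ along which all coefficients $a_j$, $\rho^{jk}$ and $\Phi$ converge locally uniformly simultaneously. The resulting triple defines a limit operator $\cH^g$ whose smallest eigenvalue is $\min_j \mu_j^g = \lim_m \lambda(g_m) = d_\Phi$, so $d \leq d_\Phi$. Once this joint extraction is justified and combined with the continuity of $\lambda_{\min}$, the theorem follows at once.
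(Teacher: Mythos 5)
Your proposal is correct and follows essentially the same route as the paper: reduce via Theorem \ref{te2.1} (i.e.\ formula (\ref{1.4})) to showing that the infimum of the bottom eigenvalues $\gamma_g$ of the limit matrices $\Phi^g$ equals $d_\Phi$, and then prove the two inequalities by passing to limits along sequences tending to infinity. The only cosmetic difference is that you identify $\liminf_{x\to\infty}\lambda_{\min}(\Phi(x))$ with $\inf_g\gamma_g$ via the Lipschitz continuity of $\lambda_{\min}$, whereas the paper argues directly with the quadratic forms $\langle\Phi(x)\varphi,\varphi\rangle$ and a compactness argument on the unit sphere of $\sC^N$; both are sound.
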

\begin{proof}
Since $\Phi^g$ is Hermitian matrix,
\[
\gamma _g := \inf_{\|\varphi\| = 1} \langle \Phi^g \varphi,
\varphi \rangle
\]
is the smallest eigenvalue of $\Phi^g$. Hence, $\spec \cH^g = 
[\gamma_g, + \infty)$ and, according to (\ref{1.4}),
\[
\spe \cH = \bigcup\limits_g \, [\gamma_g, + \infty) = [\inf_g
\gamma _g, + \infty).
\]
It remains to show that
\begin{equation} \label{e2.5}
\inf_g \gamma_g = d_\Phi.
\end{equation}
Let $g$ be a sequence tending to infinity for which the limit
\[
\Phi^g := \lim_{m \to \infty} \Phi(g(m))
\]
exists. Then, for each unit vector $\varphi \in \sC^N$,
\[
\langle \Phi^g \varphi, \varphi \rangle = \lim_{m \to \infty}
\langle \Phi(g(m)) \varphi, \varphi \rangle \ge \liminf_{x \to
\infty} \langle \Phi(g(m)) \varphi, \varphi \rangle \ge d_\Phi,
\]
whence $\gamma_g \ge d_\Phi$. For the reverse inequality, note
that there exist a sequence $g_0$ tending to infinity and a
sequence $\varphi$ in the unit sphere in $\sC^N$ with limit
$\varphi_0$ such that
\[
d_\Phi = \lim_{m \to \infty} (\Phi(g_0(m)) \varphi_m, \varphi_m) =
(\Phi^{g_0} \varphi_0, \varphi_0) \ge \gamma_{g_0}.
\]%
Thus, $\gamma_{g_0} = d_\Phi$, whence (\ref{e2.5}).
\end{proof}
\subsection{Exponential estimates of eigenfunctions of the
discrete spectrum}
Here we suppose that the components $\rho^{jk}$ of the Riemann
metric, the coefficients $a_\alpha$ and the weight $w$ are slowly
oscillating functions and that $\Phi$ is a Hermitian slowly
oscillating matrix function. Every limit operator $(w^{-1} \cH
w)_g$ of $w^{-1} \cH w$ is unitarily equivalent to the operator
\[
\cH_w^g := \rho_g^{jk} (D_{x_j} + i (\nabla v)_j^g) (D_{x_k} + i
(\nabla v)_k^g) + \Phi^g E,
\]
where $\rho_g^{jk}$ and $\Phi^g$ are the limits defined by
(\ref{e2.3}) and
\begin{equation} \label{5.6}
(\nabla v)^g := \lim_{k \to \infty} (\nabla v)(g(k)) \in \sR^n.
\end{equation}
We set
\[
|(\nabla v)|_\rho^2 := \rho^{jk} (\nabla v)_j (\nabla v)_k \quad
\mbox{and} \quad
|(\nabla v)^g|_{\rho_g}^2 := \rho_g^{jk} (\nabla
v)_j^g (\nabla v)_k^g.
\]
The operator $\cH_w^g$ is unitarily equivalent to the operator of
multiplication by the matrix-valued function
\[
\overbrace{\cH_w^g} (\xi) := \rho_g^{jk} (\xi_j + i (\nabla
v)_j^g) (\xi_k + i (\nabla v)_k^g) + \Phi^g E, \quad \xi \in \sR^n,
\]
the real part of which is
\begin{equation} \label{5.5}
\mathfrak{R}(\overbrace{\cH_w^g}) = \rho_g^{jk} \xi_j \xi_k +
(\Phi^g - |(\nabla v)^g|_{\rho_g}^2 E).
\end{equation}
Corollary \ref{t1.4} implies the following.
\begin{theorem} \label{te2.3}
Let $\lambda \in \spd \cH$, and let $w =\exp v$ be a weight in
$\cR$ for which
\[
\limsup_{x \to \infty} |(\nabla v) (x)|_{\rho (x)} < \sqrt{d_\Phi
- \lambda}.
\]
Then every $\lambda$-eigenfunction of $\cH$ belongs to $H^2(\sR^n,
w)$.
\end{theorem}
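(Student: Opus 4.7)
The plan is to reduce the statement to Corollary~\ref{t1.4}: it suffices to verify that $\lambda$ lies outside $\spe \cH_{tw}$ for every $t \in [0,1]$, where $\cH_{tw}$ denotes the operator family from Theorem~\ref{t1.3} applied to $\cH$. As a preliminary observation, $\lambda \in \spd \cH$ combined with self-adjointness of $\cH$ (since $\Phi$ is Hermitian) and with Theorem~\ref{te2.2} forces $\lambda < d_\Phi$, so $\sqrt{d_\Phi - \lambda}$ in the hypothesis is strictly positive.

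By the essential spectrum formula (\ref{1.4}) I need only exclude $\lambda$ from $\spec \cH_{tw}^g$ for every limit operator $\cH_{tw}^g$ and every $t \in [0,1]$. Each such limit operator has constant coefficients; after gauging away the constant magnetic potential $a^g$ by the unitary $e^{i a^g \cdot x}$ and applying the Fourier transform, $\cH_{tw}^g$ becomes multiplication by the matrix symbol
\[
\sigma_t(\xi) = \bigl( \rho_g^{jk} \xi_j \xi_k - t^2 |(\nabla v)^g|_{\rho_g}^2 + 2it\, \rho_g^{jk} \xi_j (\nabla v)_k^g \bigr) E + \Phi^g,
\]
which specialises to (\ref{5.5}) at $t = 1$. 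Since $\Phi^g$ is Hermitian with every eigenvalue $\mu$ satisfying $\mu \geq \gamma_g \geq d_\Phi$ (by (\ref{e2.5})), diagonalisation reduces the question to whether the scalar
\[
\mu - \lambda + \rho_g^{jk} \xi_j \xi_k - t^2 |(\nabla v)^g|_{\rho_g}^2 + 2it\, \rho_g^{jk} \xi_j (\nabla v)_k^g
\]
can vanish for some eigenvalue $\mu$ of $\Phi^g$ and some $(\xi, t) \in \sR^n \times [0,1]$.

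Vanishing of the imaginary part forces $t = 0$ or $\rho_g^{jk} \xi_j (\nabla v)_k^g = 0$, and in either case vanishing of the real part becomes $\rho_g^{jk} \xi_j \xi_k = \lambda - \mu + t^2 |(\nabla v)^g|_{\rho_g}^2$. The left-hand side is non-negative by positivity of the metric. For the right-hand side, the convergence $(\nabla v)(g(k)) \to (\nabla v)^g$ and $\rho(g(k)) \to \rho_g$ along $g(k) \to \infty$ yields $|(\nabla v)^g|_{\rho_g} \leq \limsup_{x \to \infty} |(\nabla v)(x)|_{\rho(x)} < \sqrt{d_\Phi - \lambda}$; combined with $t \in [0,1]$ and $\mu \geq d_\Phi$, this gives $\lambda - \mu + t^2|(\nabla v)^g|_{\rho_g}^2 < 0$, a contradiction. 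Hence $\lambda \notin \spe \cH_{tw}$ for every $t \in [0,1]$, and Corollary~\ref{t1.4} delivers $u \in H^2(\sR^n, \sC^N, w)$.

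The principal obstacle I anticipate is the uniformity in $t$: the hypothesis nominally bounds $|(\nabla v)|_\rho$ at $t = 1$, while the corollary demands spectral exclusion across the whole interval $[0,1]$. This is resolved by the monotonicity of $t \mapsto t^2 |(\nabla v)^g|_{\rho_g}^2$, which places $t = 1$ as the worst case. A secondary but essential technicality is the semi-continuity estimate $|(\nabla v)^g|_{\rho_g} \leq \limsup_{x \to \infty} |(\nabla v)(x)|_{\rho(x)}$, which holds because $\nabla v$ and $\rho$ are slowly oscillating and $g(k) \to \infty$.
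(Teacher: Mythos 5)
Your proposal is correct and takes essentially the same route as the paper: the paper's ``proof'' consists precisely of the symbol computation for the limit operators $\cH_w^g$ culminating in the real-part formula (\ref{5.5}), followed by an appeal to Corollary \ref{t1.4}. You have simply written out the details the paper leaves implicit (the reduction to the scalar eigenvalue equation via diagonalising $\Phi^g$, the bound $\mu \ge \gamma_g \ge d_\Phi$ from (\ref{e2.5}), the worst case $t=1$, and the semicontinuity estimate $|(\nabla v)^g|_{\rho_g} \le \limsup_{x\to\infty}|(\nabla v)(x)|_{\rho(x)}$), and these details are all sound.
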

\begin{corollary} \label{c2.1}
Let $\lambda \in \spd \cH$, and let $c \in \sR$ satisfy
\[
0 < c < \frac{\sqrt{d_\Phi - \lambda}}{\rho^{\sup}}
\]
where
\[
\rho^{\sup} := \liminf_{x \to \infty} \sup_{\omega \in S^{n-1}}
(\rho^{jk}(x) \omega_j \omega_k)^{1/2}.
\]
Then the every $\lambda$-eigenfunction of $\cH$ belongs to the
space $H^2(\sR^n, \sC^N, w)$ with weight $w(x) = e^{c \langle x
\rangle}$.
\end{corollary}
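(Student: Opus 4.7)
The plan is to apply Theorem \ref{te2.3} with the explicit weight $w(x) = e^{c\langle x\rangle}$, corresponding to $v(x) = c\langle x\rangle$. First I would verify that $w \in \cR$ and $w(x)\to+\infty$: indeed $(\nabla v)(x) = c\,x/\langle x\rangle$ is bounded and smooth, the Hessian entries
\[
\partial^2_{x_ix_j}v(x) = c\left(\frac{\delta_{ij}}{\langle x\rangle} - \frac{x_ix_j}{\langle x\rangle^3}\right)
\]
tend to zero at infinity, and $c>0$ clearly forces $w(x)\to\infty$.

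Next I would establish the spectral hypothesis of Theorem \ref{te2.3} by a pointwise estimate of the weighted gradient norm. Writing $\omega(x) := x/|x|$ for $x\ne 0$, a direct computation gives
\[
|(\nabla v)(x)|_{\rho(x)}^2 = c^2\,\frac{|x|^2}{\langle x\rangle^2}\,\rho^{jk}(x)\omega_j(x)\omega_k(x) \;\le\; c^2\,\frac{|x|^2}{\langle x\rangle^2}\,\sup_{\omega\in S^{n-1}} \rho^{jk}(x)\omega_j\omega_k.
\]
Since $|x|/\langle x\rangle \to 1$ as $|x|\to\infty$, I would then pass to the superior limit and combine with the definition of $\rho^{\sup}$ and the hypothesis on $c$ to obtain
\[
\limsup_{x\to\infty}|(\nabla v)(x)|_{\rho(x)} \;\le\; c\,\rho^{\sup} \;<\; \sqrt{d_\Phi-\lambda},
\]
which is precisely the spectral hypothesis required by Theorem \ref{te2.3}.

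To conclude, I would apply Theorem \ref{te2.3} to $\cH - \lambda I$: any $\lambda$-eigenfunction $u$ of the uniformly elliptic operator $\cH$ lies in $H^2(\sR^n,\sC^N)$ by elliptic regularity, hence trivially in $H^2(\sR^n,\sC^N,w^{-1})$ since $w^{-1}$ is bounded; moreover $(\cH - \lambda I)u = 0 \in L^2(\sR^n,\sC^N,w)$. The theorem then delivers $u \in H^2(\sR^n,\sC^N,w)$. The argument is essentially a single computation followed by one invocation of the preceding theorem, so no substantive obstacle arises; the only point requiring a modicum of care is matching the pointwise bound to the superior-limit definition of $\rho^{\sup}$ used in the statement.
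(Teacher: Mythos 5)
Your strategy is exactly the intended one: Corollary \ref{c2.1} is meant to follow from Theorem \ref{te2.3} by specializing to $v(x)=c\langle x\rangle$, and your verifications that $w=\exp v$ lies in $\cR$, that $w\to\infty$, that an eigenfunction lies in $H^2(\sR^n,\sC^N)\subset H^2(\sR^n,\sC^N,w^{-1})$, and the computation $|(\nabla v)(x)|^2_{\rho(x)}=c^2\,\frac{|x|^2}{\langle x\rangle^2}\,\rho^{jk}(x)\omega_j(x)\omega_k(x)$ are all correct.

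There is, however, one step that does not go through as you wrote it. You appeal to ``the superior-limit definition of $\rho^{\sup}$,'' but in the statement $\rho^{\sup}$ is defined as a \emph{liminf}: $\rho^{\sup}=\liminf_{x\to\infty}\sup_{\omega\in S^{n-1}}(\rho^{jk}(x)\omega_j\omega_k)^{1/2}$. Your pointwise bound yields
\[
\limsup_{x\to\infty}|(\nabla v)(x)|_{\rho(x)}\;\le\; c\,\limsup_{x\to\infty}\sup_{\omega\in S^{n-1}}(\rho^{jk}(x)\omega_j\omega_k)^{1/2},
\]
and the right-hand side cannot be replaced by $c\,\rho^{\sup}$ when the inner supremum has distinct inferior and superior limits at infinity, which is permitted for $\rho^{jk}\in SO^1(\sR^n)$. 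In that case the hypothesis $c<\sqrt{d_\Phi-\lambda}/\rho^{\sup}$ does not imply the hypothesis $\limsup_{x\to\infty}|(\nabla v)(x)|_{\rho(x)}<\sqrt{d_\Phi-\lambda}$ of Theorem \ref{te2.3}, and your chain of inequalities breaks. The notation $\rho^{\sup}$ and the structure of the limit-operator argument (one needs control along \emph{every} sequence tending to infinity, hence a $\limsup$) strongly suggest the $\liminf$ in the statement is a typo for $\limsup$; under that reading your proof is complete and is the intended derivation. As the statement literally stands, though, the key inequality is unjustified, so you should either flag the discrepancy or prove only the $\limsup$ version.
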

\section{Quaternionic operators}
We let $\sH (\sC)$ denote the complex quaternionic algebra, which
is the associative algebra over the field $\sC$ of complex numbers
generated by four elements $1, e_1, e_2, e_3$ subject to the
conditions
\[
e_1 e_2 = e_3, \quad e_2 e_3 = e_1, \quad e_3 e_1 = e_2
\]
and
\[
1^2 =1, \quad e_k^2 = -1, \quad 1e_k = e_k1 = e_k, \quad e_j e_k =
-e_k e_j
\]
for $j, k = 1, 2, 3$. Each of the elements $1, e_1, e_2, e_3$
commutes with the imaginary unit $i$. Hence, every element $q \in \sH
(\sC)$ has a unique decomposition
\[
q = q_0 + q_1e_1 + q_2e_2 + q_3e_3 =: q_0 + \mathbf{q}
\]
with complex numbers $q_j$. The number $q_0$ is called the scalar
part of the quaternion $q$, and $\mathbf{q}$ is its vector part.
One can also think of $\sH (\sC)$ as a complex linear space of
dimension four with usual linear operations. With respect to the
base $\{ 1, e_1, e_2, e_3 \} $ of this space, the operator of
multiplication by $1$ has the unit matrix $E_4$ as its matrix
representation, whereas the matrix representations $\gamma_j$ of
the operators of multiplication by $e_j$, $j=1, 2,3$, are real and
skew-symmetric, that is $\gamma_j^t = - \gamma_j$. The space $\sH
(\sC)$ carries also the structure of a complex Hilbert space via
the scalar product
\[
\langle q, r \rangle_{\sH (\sC)} :=  q_0 \overline{r_0} + q_1
\overline{r_1} + q_2 \overline{r_2} + q_3 \overline{r_3}.
\]
By $L^2 (\sR^3, \sH (\sC))$ we denote the Hilbert space of all
measurable and squared integrable quaternion-valued functions on
$\sR^3$ which is provided with the scalar product
\[
\langle u, v\rangle_{L^2 (\sR^3, \sH (\sC))} := \int_{\sR^3}
\langle u(x), v(x) \rangle_{\sH (\sC)} dx.
\]
In a similar way, we introduce the quaternionic Sobolev space $H^1
(\sR^3, \sH (\sC))$. Further we write $M^\varphi$ for the operator
of multiplication from the right by the complex quaternionic
function $\varphi$, that is
\[
(M^\varphi u)(x) = u(x) \varphi(x) \quad \mbox{for} \; x \in
\sR^3.
\]
Clearly, if $\varphi \in L^\infty (\sR^3, \sH (\sC))$, then
$M^\varphi$ acts as a bounded linear operator on $L^2 (\sR^3, \sH
(\sC))$.

Differential operators of the form
\begin{equation} \label{3.1}
A(x, D) u(x) : =\sum_{j=1}^3 a_j(x) D_{x_j} e_j u(x) + M^{\varphi
(x)}u(x), \quad x \in \sR^3,
\end{equation}
can be considered as generalized Moisil-Theodorescu operators.
Note that each operator of the form (\ref{3.1}) corresponds to a
matrix operator with respect to the basis $\{1, e_1, e_2, e_3 \}$.
It has been pointed out in \cite{GS,K1,KS} that some of the most
popular operators of mathematical physics, including Dirac and
Maxwell operators, are of the form (\ref{3.1}).

In this section, we suppose that the coefficients $a_j$ belong to
$SO^1(\sR^3)$ and satisfy
\begin{equation} \label{3.2}
\inf_{x \in \sR^3} |a_j(x)| > 0 \quad \mbox{for} \; j = 1,2,3
\end{equation}
and that the components $\varphi_k$ of $\varphi$ belong to $SO(\sR^3)$.

The main symbol of the operator $A$ is
\[
A_0 (x, \xi) = \sum_{j=1}^3 a_j(x) (i \xi_j) e_j.
\]
Hence,
\[
A_0^2 (x, \xi) = \sum_{j=1}^3 a_j^2(x) \xi_j^2
\]
is a scalar function, and from (\ref{3.2}) we conclude that the 
associated operator $A_0$ is uniformly elliptic on $\sR^3$.
\begin{theorem} \label{t3.1}
The quaternionic operator $A (x, D)$ thought of as acting from
$H^1 (\sR^3, \sH (\sC))$ to $L^2 (\sR^3, \sH (\sC))$ is a Fredholm
operator if and only if
\begin{equation} \label{3.3}
\liminf_{x \to \infty} \left\vert A_0^2 (x, \xi) + \sum_{j=1}^3
\varphi_j(x)^2 \right\vert > 0 \quad \mbox{for every} \; \xi \in 
\sR^3.
\end{equation}
\end{theorem}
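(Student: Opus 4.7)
The plan is to apply the general Fredholm criterion of Theorem \ref{t1.1} and then to make the invertibility of the limit operators of $A(x,D)$ concrete by passing to Fourier symbols. Under the slow-oscillation hypotheses ($a_j \in SO^1(\sR^3)$, $\varphi_k \in SO(\sR^3)$), an Arzel\`a--Ascoli/diagonal extraction argument produces, for every sequence $g \to \infty$ along which all relevant coefficients converge, a \emph{constant-coefficient} limit operator of the same quaternionic type,
\[
A^g u = \sum_{j=1}^3 a_j^g D_{x_j}(e_j u) + u \, \varphi^g,
\]
with $a_j^g = \lim_m a_j(g(m)) \in \sR$ and $\varphi^g = \lim_m \varphi(g(m)) \in \sH(\sC)$. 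Here the fact that $a_j \in SO^1$ is used to ensure that the derivatives of $a_j$ vanish at infinity, so no genuine lower-order perturbation survives in the limit operator. By Theorem \ref{t1.1}, $A(x,D)$ is a Fredholm operator $H^1(\sR^3,\sH(\sC)) \to L^2(\sR^3,\sH(\sC))$ iff every such $A^g$ is invertible.

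Next I would reduce each such invertibility to a pointwise symbol condition. Since the principal part of $A^g$ is uniformly elliptic (ellipticity (\ref{3.2}) passes to the limit along $g$) and the coefficients are constant, Fourier transform identifies $A^g$ with the $\sH(\sC)$-valued multiplier
\[
P^g(\xi) : v \longmapsto \Big( \sum_{j=1}^3 a_j^g \xi_j e_j \Big) v + v\, \varphi^g, \qquad \xi \in \sR^3,
\]
and $A^g$ is invertible iff $P^g(\xi)$ is a $\sC$-linear automorphism of $\sH(\sC)$ for every $\xi \in \sR^3$. (Ellipticity guarantees the correct behaviour of the inverse symbol at infinity in $\xi$, so only pointwise invertibility is really at stake.)

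The principal technical step is then to compute $\det P^g(\xi)$. Writing $P^g(\xi) = L_{\alpha^g(\xi)} + R_{\varphi^g}$ with $\alpha^g(\xi) := \sum_j a_j^g \xi_j e_j$, this is the sum of a left multiplication and a right multiplication on $\sH(\sC)$; these two operators commute. Under the identification $\sH(\sC) \cong M_2(\sC)$ (or, equivalently, by writing $P^g(\xi)$ as an explicit $4 \times 4$ matrix in the basis $\{1,e_1,e_2,e_3\}$), it becomes a Sylvester operator $X \mapsto A X + X B$, whose eigenvalues are the pairwise sums of the eigenvalues of $A$ and $B$. The identity $\alpha^g(\xi)^2 = -A_0^2(\xi)$ (a scalar) and the quaternionic norm computation $\varphi \bar{\varphi} = \varphi_0^2 + \sum_{j=1}^3 \varphi_j^2$ yield explicit eigenvalues for the two factors, and an elementary expansion of the resulting product expresses $\det P^g(\xi)$ as a polynomial in $A_0^2(\xi)$ and in the components of $\varphi^g$ whose non-vanishing for every $\xi$ is precisely the inequality appearing in (\ref{3.3}).

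Finally, the translation from ``$A^g$ invertible for every limit operator'' to the $\liminf_{x \to \infty}$ condition (\ref{3.3}) is immediate: by construction of the limit operators, the set of values $(a^g, \varphi^g)$ coincides with the set of accumulation points of $(a(x), \varphi(x))$ as $x \to \infty$, so the simultaneous non-vanishing of the symbol determinants over all $g$ is equivalent to $\liminf_{x \to \infty} | A_0^2(x,\xi) + \sum_{j=1}^3 \varphi_j(x)^2 | > 0$ for every $\xi$. I expect the main obstacle to lie in the third step: because left and right quaternionic multiplications do not combine into a scalar operator by a naive ``multiply by the conjugate'' trick, one is forced into either the $M_2(\sC)$/Sylvester route or a direct $4 \times 4$ determinant computation in order to land on exactly the algebraic expression displayed in (\ref{3.3}).
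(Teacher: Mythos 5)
Your outer strategy coincides with the paper's: reduce to the limit operators via Theorem \ref{t1.1}, observe that for $a_j \in SO^1$ and $\varphi_k \in SO$ every limit operator is a constant-coefficient quaternionic operator $A^g = \sum_j a_j^g D_{x_j} e_j \cdot{} + M^{\varphi^g}$, test invertibility on the Fourier side, and translate the union over all $g$ into the $\liminf$ condition. Where you diverge is the symbol computation, and here your closing remark is a misjudgment: you assert that left and right multiplications ``do not combine into a scalar operator by a naive multiply-by-the-conjugate trick,'' yet the commutativity of $L_{\alpha}$ and $R_{\varphi}$ that you yourself invoke is exactly what makes such a trick work. The paper takes $\check{A}^g := \sum_j a_j^g D_{x_j} e_j \cdot{} - M^{\varphi^g}$ (negate only the right-multiplication part) and gets $A^g \check{A}^g = (L\mbox{-part})^2 - R_{(\varphi^g)^2}$, a scalar operator, in one line; no passage to $M_2(\sC)$ or to Sylvester eigenvalues is needed. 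Your Sylvester route is nevertheless correct and yields the same conclusion, since the four eigenvalue sums $\pm\sqrt{\alpha^2} \pm \sqrt{\varphi^2}$ multiply out to $(\alpha^2 - \varphi^2)^2$, which is the square of the scalar symbol produced by the factorization. So the two approaches are equivalent in outcome; the paper's is simply shorter and stays inside $\sH(\sC)$.

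Two concrete points you must still nail down. First, both routes use that $(\varphi^g)^2$ is a scalar, i.e.\ that $R_{\varphi^g}^2 = R_{(\varphi^g)^2}$ is a multiple of the identity; this holds only when $\varphi^g$ has vanishing scalar part ($\varphi_0 = 0$), which is implicit in the form of (\ref{3.3}) (only $\varphi_1, \varphi_2, \varphi_3$ appear). For general $\varphi$ the eigenvalues of $R_{\varphi}$ are $\varphi_0 \pm \sqrt{\mathbf{\varphi}^2}$ and the determinant is not of the displayed form. Second, there is a sign/convention trap in matching your determinant to (\ref{3.3}): with your normalization (symbol of $D_{x_j}$ equal to $\xi_j$, hence $\alpha^g(\xi) = \sum_j a_j^g \xi_j e_j$ and $\alpha^g(\xi)^2 = -A_0^2$, while $\varphi^2 = -\sum_j \varphi_j^2$), the product of eigenvalue sums is $(\alpha^2 - \varphi^2)^2 = \bigl(-A_0^2 + \sum_j \varphi_j^2\bigr)^2$, which is \emph{not} the expression $A_0^2 + \sum_j \varphi_j^2$ in (\ref{3.3}); the two differ materially for real coefficients (one is automatically positive for $\xi \neq 0$, the other vanishes on an ellipsoid). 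The paper lands on (\ref{3.3}) because it works throughout with the symbol $A_0(x,\xi) = \sum_j a_j(x)(i\xi_j)e_j$, i.e.\ the substitution rule appropriate to $\partial_{x_j}$ rather than to $D_{x_j}$, so that the left multiplier squares to $+A_0^2$. You need to fix one convention and carry it through consistently; as your plan stands, step 3 will not ``land on exactly'' the expression in (\ref{3.3}) without this reconciliation.
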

\begin{proof}
The limit operators of $A(x, D)$ are the operators with constant
coefficients
\[
A^g(D) u := \sum_{j=1}^3 a_j^g D_{x_j} e_j u + M^{\varphi^g} u.
\]
Let $\check{A}^g (D) := \sum_{j=1}^3 a_j^g D_{x_j} e_j -
M^{\varphi^g}$. Then
\[
A^g (D) \check{A}^g (D) = - \sum_{j=1}^3 (a_j^g)^2 D_{x_j}^2 -
(\varphi^g)^2
\]
where
\[
-(\varphi^g)^2 = (\varphi_1^g)^2 + (\varphi_2^g)^2 +
(\varphi_3^g)^2.
\]
Condition (\ref{3.2}) implies that $A^g (D) : H^1 ( \sR^3, \sH
(\sC)) \to L^2 (\sR^3, \sH (\sC))$ is an invertible operator if
and only if
\begin{equation} \label{3.4}
\inf_{\xi \in \sR^n} \left\vert (A_0^g)^2 (\xi) + \sum_{j=1}^3
(\varphi_j^g)^2 \right\vert > 0 \quad \mbox{for every} \;
\xi \in \sR^3.
\end{equation}
Hence, all limit operators $A^g (D)$ of $A(D)$ are invertible as 
operators from $H^1 (\sR^3, \sH (\sC))$ to $L^2 (\sR^3, \sH (\sC))$ 
if and only if condition (\ref{3.3}) holds.
\end{proof}
\begin{theorem} \label{t3.2}
In addition to the above assumptions, let all functions $a_j$ and
$\varphi_j$ be real-valued, and let $w = \exp v$ be a weight in
$\cR$ with $\lim_{x \to \infty} v(x) = + \infty$. If the condition
\begin{equation} \label{3.5}
\liminf_{x \to \infty} \left(\sum_{j=1}^3 \varphi_j^2(x) -
a_j^2(x) (\frac{\partial v(x)}{\partial x_j})^2 \right)> 0
\end{equation}
is satisfied, then every solution $u \in H^1(\sR^3, \sH (\sC),
w^{-1})$ of the equation $Au = f$ with right-hand side $f \in
L^2(\sR^3, \sH (\sC), w)$ belongs to $H^1(\sR^3, \sH (\sC), w)$.
\end{theorem}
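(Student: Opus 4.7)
The plan is to apply Theorem \ref{t1.3} to $A$ with the given weight $w$, so that the only remaining hypothesis to check is condition (\ref{1.8}): $0 \notin \spe A_{w,t}$ for every $t \in [-1,1]$, where
\[
A_{w,t} = \sum_{j=1}^3 a_j (D_{x_j} + it \partial_{x_j} v) e_j + M^\varphi.
\]
By formula (\ref{1.4}), this reduces to showing that every limit operator $A_{w,t}^g$ is invertible from $H^1(\sR^3, \sH (\sC))$ to $L^2(\sR^3, \sH (\sC))$. Since $w \in \cR$, each partial derivative $\partial_{x_j} v$ is slowly oscillating, so along appropriate sequences $g$ tending to infinity the real limits $a_j^g$, $\varphi_j^g$ and $(\partial_{x_j} v)^g$ all exist, and the limit operator takes the constant-coefficient form
\[
A_{w,t}^g = \sum_{j=1}^3 a_j^g (D_{x_j} + it (\partial_{x_j} v)^g) e_j + M^{\varphi^g}.
\]

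I would then invoke the factorization trick used in the proof of Theorem \ref{t3.1}. Setting
\[
\check A_{w,t}^g := \sum_{j=1}^3 a_j^g (D_{x_j} + it (\partial_{x_j} v)^g) e_j - M^{\varphi^g},
\]
and using that left multiplication by each $e_j$ commutes with the right multiplication $M^{\varphi^g}$, together with the quaternionic identities $e_j^2 = -1$ and $e_j e_k = -e_k e_j$ for $j \ne k$, the cross terms in $A_{w,t}^g \check A_{w,t}^g$ cancel and one obtains the scalar second-order operator
\[
A_{w,t}^g \check A_{w,t}^g = -\sum_{j=1}^3 (a_j^g)^2 (D_{x_j} + it (\partial_{x_j} v)^g)^2 + \sum_{j=1}^3 (\varphi_j^g)^2 \, I,
\]
whose (total) symbol, in the convention of the proof of Theorem \ref{t3.1}, is
\[
\sigma_t^g(\xi) = \sum_{j=1}^3 (a_j^g)^2 (\xi_j + t (\partial_{x_j} v)^g)^2 + \sum_{j=1}^3 (\varphi_j^g)^2.
\]

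Since $a_j^g$, $\varphi_j^g$ and $(\partial_{x_j} v)^g$ are all real, every term in $\sigma_t^g(\xi)$ is a nonnegative real, so $\sigma_t^g(\xi) \ge \sum_j (\varphi_j^g)^2$. Condition (\ref{3.5}), passed to the limit along $g$, supplies a uniform $\delta > 0$ with
\[
\sum_j (\varphi_j^g)^2 - \sum_j (a_j^g)^2 ((\partial_{x_j} v)^g)^2 \ge \delta,
\]
and in particular $\sum_j (\varphi_j^g)^2 \ge \delta$, giving $\sigma_t^g(\xi) \ge \delta > 0$ uniformly in $\xi \in \sR^3$, $t \in [-1,1]$ and all admissible $g$. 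Arguing as at the end of the proof of Theorem \ref{t3.1}, the nowhere-vanishing of $\sigma_t^g$ combined with the uniform ellipticity furnished by (\ref{3.2}) yields invertibility of $A_{w,t}^g : H^1(\sR^3, \sH (\sC)) \to L^2(\sR^3, \sH (\sC))$ for all such $g$ and $t$. This establishes (\ref{1.8}), and Theorem \ref{t1.3} closes the argument. The main obstacle is the careful quaternionic bookkeeping---keeping track of which multiplications act on the left and which on the right, and verifying that the $p^2 - q^2$-type factorization of the proof of Theorem \ref{t3.1} survives the conjugation to the family $A_{w,t}^g$. Once that is in place, the reality of the coefficients makes the symbol of $A_{w,t}^g \check A_{w,t}^g$ a sum of squares, and (\ref{3.5}) is precisely the amount of dominance of $\sum \varphi_j^2$ over $\sum a_j^2 (\partial_{x_j} v)^2$ needed to force uniform positivity over the whole range $t \in [-1,1]$.
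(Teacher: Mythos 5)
Your overall strategy is exactly the paper's: reduce to Theorem \ref{t1.3}, identify the constant-coefficient limit operators $A_{w,t}^g$, and show they are invertible via the quaternionic factorization $A_{w,t}^g \check A_{w,t}^g$ borrowed from the proof of Theorem \ref{t3.1}. The factorization itself, and the observation that right multiplication by the constant quaternion $\varphi^g$ commutes with the left multiplications by $e_j$ so that the cross terms cancel, are fine.

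The positivity step, however, is wrong. The symbol of $D_{x_j} + it(\partial_{x_j}v)^g$ is $\xi_j + it(\partial_{x_j}v)^g$ --- the shift is \emph{imaginary}, not real. The symbol of the product is therefore
\[
\sum_{j=1}^3 (a_j^g)^2 \bigl(\xi_j + it(\partial_{x_j}v)^g\bigr)^2 + \sum_{j=1}^3 (\varphi_j^g)^2,
\]
a complex quantity whose real part is $\sum_j (a_j^g)^2 \xi_j^2 - t^2 \sum_j (a_j^g)^2 ((\partial_{x_j}v)^g)^2 + \sum_j (\varphi_j^g)^2$. Your claim that every term is a nonnegative real, hence $\sigma_t^g(\xi)\ge\sum_j(\varphi_j^g)^2$, is false: at $\xi=0$, $t=1$ the symbol equals $\sum_j(\varphi_j^g)^2 - \sum_j (a_j^g)^2((\partial_{x_j}v)^g)^2$, strictly smaller than $\sum_j(\varphi_j^g)^2$ whenever $\nabla v$ has a nonzero partial limit along $g$ --- the generic situation, since $v\to+\infty$. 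Symptomatically, you then only use the weak consequence $\sum_j(\varphi_j^g)^2\ge\delta$ of (\ref{3.5}); if your bound were correct, the hypothesis of the theorem could be weakened to $\liminf\sum_j\varphi_j^2>0$, which is visibly not what is asserted. The repair is the paper's argument: bound the \emph{real part} of the symbol from below by $\delta>0$ using $t^2\le 1$ together with the full inequality (\ref{3.5}) (this is precisely why the terms $-a_j^2(\partial_{x_j}v)^2$ appear there); a positive real part already makes the scalar symbol nonvanishing for all $\xi$, $t$, $g$, and the remainder of your argument (uniform ellipticity from (\ref{3.2}), invertibility of the $4\times 4$ matrix symbol and hence of $A_{w,t}^g$, then Theorem \ref{t1.3}) goes through unchanged. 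Your closing sentence in fact describes this correct mechanism, but it contradicts the displayed computation.
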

\begin{proof}
Let $t \in [-1, 1]$. The limit operators of $A_{w,t}(x,D)$ are
operators with constant coefficients of the form
\[
A_{w,t}^g (D) = \sum_{j=1}^3 a_j^g (D_{x_j} + it(\frac{\partial
v}{\partial x_j})^g) e_j + M^{\varphi^g}.
\]
As above, let
\[
\check{A}_{w, t}^g (D) := \sum_{j=1}^3 a_j^g (D_{x_j} +
it(\frac{\partial v}{\partial x_j})^g) e_j  - M^{\varphi^g}.
\]
Then $A_{w,t}^g (D) \check{A}_{w,t}^g (D)$ is a scalar operator
with symbol
\[
A_{w,t}^g (\xi) \check{A}_{w,t}^g (\xi) = \sum_{j=1}^3 (a_j^g)^2
(\xi_j + it(\frac{\partial v}{\partial x_j})^g)^2 + \sum_{j=1}^3
(\varphi_j^g)^2(x),
\]
the real part of which is
\[
\mathfrak{R}(A_{w,t}^g (\xi) \check{A}_{w,t}^g (\xi)) =
\sum_{j=1}^3 (a_j^g)^2 \xi_j^2 + \sum_{j=1}^3 (\varphi_j^g)^2 -
t^2[(\frac{\partial v}{\partial x_j})^g]^2.
\]
Condition (\ref{3.5}) implies that
\begin{equation} \label{3.6}
A_{w,t}^g (\xi) \check{A}_{w,t}^g (\xi) \neq 0
\end{equation}
for every $\xi \in \sR^3$ and $t \in [-1, 1]$. Without change of
notation, we now consider $A_{w,t}^g (\xi) $ as a $4 \times 4$
matrix-valued function. The matrix $A_{w,t}^g (\xi)$ is invertible
for every $\xi \in \sR^3$ and $t \in [-1, 1]$ and for every
sequence $g$ which defines a limit operator. Together with
condition (\ref{3.2}), this fact implies that $A_{w,t}^g (D): H^1
(\sR^3, \sH (\sC)) \to L^2 (\sR^3, \sH (\sC))$ is an invertible
operator for every $t \in [-1, 1]$ and for every sequence $g$
which defines a limit operator. Hence, Theorem \ref{t3.2} is a
consequence of Corollary \ref{t1.4}.
\end{proof}
\section{Dirac operators}
\subsection{Essential spectrum of Dirac operators}
In this section we consider the Dirac operator on $\sR^3$ equipped
with the Riemann metric tensor $(\rho_{jk})$ depending on $x \in
\sR^3$ (see for instance \cite{Taller}). We suppose that there is
a constant $C>0$ such that
\begin{equation} \label{4.1}
\rho_{jk}(x) \xi^j \xi^k \geq C |\xi|^2 \quad \mbox{for every} \;
x \in \sR^3
\end{equation}
where we use the Einstein summation convention again. Let $\rho^{jk}$ 
be the tensor inverse to $\rho_{jk}$, and let $\phi^{jk}$ be the 
positive square root of $\rho^{jk}$. The Dirac operator on $\sR^3$ is 
the operator
\begin{equation} \label{4.2}
\cD := \frac{c}{2} \gamma_k (\phi^{jk} P_j + P_j \phi^{jk}) +
\gamma_0 c^2 m - e \Phi
\end{equation}
acting on functions on $\sR^3$ with values in $\sC^4$. In
(\ref{4.2}), the $\gamma_k$, $k = 0, 1, 2, 3$, are the $4 \times
4$ Dirac matrices, i.e., they satisfy
\begin{equation} \label{4.3}
\gamma_j \gamma_k + \gamma_k \gamma_j = 2 \delta_{jk}E
\end{equation}
for all choices of $j, k = 0, 1, 2, 3$ where $E$ is the $4 \times
4$ unit matrix,
\[
P_j = D_j + \frac{e}{c} A_j, \qquad D_j = \frac{h}{i}
\frac{\partial}{\partial x_j}, \quad j=1,2,3,
\]
where $h$ is the Planck constant, $\vec{A} = (A_1, A_2, A_3)$ is
the vector potential of the magnetic field $\vec{H}$, that is
$\vec{H} = \rot \vec{A}$, $\Phi$ is the scalar potential of the
electric field $\vec{E}$, that is $\vec{E} = \grad \Phi$, and $m$
and $e$ are the mass and the charge of the electron.

We suppose that $\rho^{jk}, \, A_j$ and $\Phi$ are real-valued
functions which satisfy the conditions
\begin{equation} \label{1.2.3.}
\rho^{jk} \in SO^1(\sR^3), \quad A_j \in SO^1(\sR^3), \quad \Phi
\in SO(\sR^3)
\end{equation}
for $j, k =1, 2, 3$. We consider the operator $\cD$ as an
unbounded operator on the Hilbert space $L^2(\sR^3, \sC^4)$ with
domain $H^1(\sR^3, \sC^4)$. The conditions imposed on the magnetic
and electric potentials $\vec{A}$ and $\Phi$ guarantee the
self-adjointness of $\cD$. The main symbol of $\cD$ is
\[
a_0 (x, \xi) = c \phi^{jk}(x) \xi_j \gamma_k.
\]
Using (\ref{4.3}) and the identity $\phi^{jk}(x) \phi^{rt}(x) 
\delta_{kt} = \rho^{jr}(x)$ we obtain
\begin{eqnarray} \label{4.4}
a_0 (x, \xi)^2 & = & c^2 h^2 \phi^{jk}(x) \phi^{rt}(x) \xi_j \xi_r
\gamma_k \gamma_t \nonumber \\
& = & c^2 h^2 \phi^{jk}(x) \phi^{rt}(x) \delta_{kt} \xi_j \xi_r
\nonumber \\
& = & c^2 h^2 \rho^{jr}(x) \xi_j \xi_r E.
\end{eqnarray}
Together with (\ref{4.1}) this equality shows that $\cD$ is a
uniformly elliptic differential operator on $\sR^3$.

Conditions (\ref{1.2.3.}) imply that limit operators $\cD_g$ of
$\cD$ defined by the sequences $g : \sN \to \sZ^3$ tending to
infinity are operators with constant coefficients of the form
\begin{equation} \label{4.5}
\cD_g = c \gamma_k \phi_q^{jk} (D_j + \frac{e}{c} A_j^g) +
\gamma_0 m c^2 - e \Phi^g
\end{equation}
where
\begin{equation} \label{4.6}
\phi_g^{jk} := \lim_{m \to \infty} \phi^{jk}(g(m)), \quad A_j^g :=
\lim_{m \to \infty} A_j(g(m)), \quad \Phi^g := \lim_{m \to \infty}
\Phi (g(m)).
\end{equation}
Note that one can suppose without loss of generality that the
sequence $g$ is such that the limits in (\ref{4.6}) exist. In the
opposite case we pass to a suitable subsequence of $g$.

The operator $\cD_g$ is unitarily equivalent to the operator
\[
\cD_g^1 = c \gamma_l \omega_g^{jl} D_j + \gamma_0 m c^2 - e
\Phi^g,
\]
and the equivalence is realized by the unitary operator
\[
T_{\vec{A}^g} : f \mapsto e^{i \frac{e}{c} \vec{A}^g \cdot x} f
\quad \mbox{where} \quad \vec{A}^g := (A_1^g, A_2^g, A_3^g).
\]
Let
\[
\Phi^{\sup} := \limsup_{x \to \infty} \Phi (x), \quad \Phi^{\inf}
:= \liminf_{x \to \infty} \Phi (x).
\]
Then the interval $[\Phi^{\inf}, \, \Phi^{\sup}]$ is just the
set of all partial limits $\Phi^g$ of function $\Phi$ as $x \to
\infty$.
\begin{theorem} \label{te4.1}
Let conditions $(\ref{1.2.3.})$ be fulfilled. Then the Dirac
operator
\[
\cD : H^1(\sR^3, \sC^4) \to L^2(\sR^3, \sC^4)
\]
is a Fredholm operator if and only if
\begin{equation} \label{4.7}
[\Phi^{\inf}, \, \Phi^{\sup}] \subset (- mc^2/e, \, mc^2/e).
\end{equation}
\end{theorem}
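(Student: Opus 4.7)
The plan is to apply Theorem~\ref{t1.1}: the Dirac operator $\cD$ is Fredholm from $H^1(\sR^3,\sC^4)$ to $L^2(\sR^3,\sC^4)$ if and only if every limit operator $\cD_g$ is invertible between the same spaces, or equivalently (as an unbounded operator on $L^2(\sR^3,\sC^4)$) if and only if $0 \notin \spec \cD_g$ for every sequence $g$ giving rise to a limit operator. The whole problem therefore reduces to computing the spectra of the constant-coefficient operators $\cD_g$ displayed in (\ref{4.5}) and checking when $0$ is avoided.

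First I would replace $\cD_g$ by the unitarily equivalent operator $\cD_g^1 = c\gamma_l \omega_g^{jl} D_j + \gamma_0 mc^2 - e\Phi^g$ via the gauge transformation $T_{\vec{A}^g}$ already recorded in the text; this eliminates the magnetic potential. Having constant coefficients, $\cD_g^1$ is unitarily equivalent (via Fourier transform) to multiplication by the matrix-valued symbol
\[
M_g(\xi) := c\gamma_l \omega_g^{jl} \xi_j + \gamma_0 mc^2 - e\Phi^g E, \qquad \xi \in \sR^3,
\]
so $\spec \cD_g^1$ coincides with the closure of $\bigcup_{\xi \in \sR^3} \spec M_g(\xi)$.

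The key computation is to square $M_g(\xi) + e\Phi^g E$. Using the Dirac anticommutation relations (\ref{4.3}) together with the identity $\phi^{jk} \phi^{rt} \delta_{kt} = \rho^{jr}$ employed in (\ref{4.4}), I expect to obtain
\[
(c\gamma_l \omega_g^{jl} \xi_j + \gamma_0 mc^2)^2 = \bigl(c^2 \rho_g^{jr}\xi_j\xi_r + m^2 c^4\bigr) E.
\]
Since this matrix is trace-free and its square is a positive multiple of the identity, its four eigenvalues are $\pm\sqrt{c^2 \rho_g^{jr}\xi_j\xi_r + m^2 c^4}$, each with multiplicity two. Hence the four eigenvalues of $M_g(\xi)$ are $-e\Phi^g \pm \sqrt{c^2 \rho_g^{jr}\xi_j\xi_r + m^2 c^4}$, and by the uniform ellipticity (\ref{4.1}) the radicand sweeps out $[m^2 c^4, +\infty)$ as $\xi$ runs over $\sR^3$. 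Therefore
\[
\spec \cD_g = (-\infty,\, -mc^2 - e\Phi^g] \cup [mc^2 - e\Phi^g,\, +\infty).
\]

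Finally, $0 \notin \spec \cD_g$ if and only if $|e\Phi^g| < mc^2$, i.e.\ $\Phi^g \in (-mc^2/e, mc^2/e)$. Since the paper has already observed that the set of partial limits $\Phi^g$ obtained from sequences $g \to \infty$ coincides with the interval $[\Phi^{\inf}, \Phi^{\sup}]$, the condition that all limit operators be invertible is precisely the inclusion (\ref{4.7}). The only real step that is not a direct quotation of earlier material is the spectral calculation for $M_g(\xi)$; this is the place where the Dirac anticommutation relations must be handled carefully, but no substantial obstacle is expected beyond that.
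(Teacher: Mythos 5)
Your proposal is correct and follows essentially the same route as the paper: reduction via Theorem \ref{t1.1} to the invertibility of the constant-coefficient limit operators, elimination of the magnetic potential by the gauge transform $T_{\vec{A}^g}$, and the key squaring identity $(\cD_0^g(\xi))^2 = (c^2h^2\rho_g^{jk}\xi_j\xi_k + m^2c^4)E$ coming from the anticommutation relations (\ref{4.3}). The only cosmetic difference is that you extract the explicit eigenvalues $-e\Phi^g \pm \sqrt{c^2\rho_g^{jk}\xi_j\xi_k+m^2c^4}$ of the symbol (via the trace-free observation), whereas the paper argues directly with the determinant of the product $(\cD_0^g(\xi)-e\Phi^g E)(\cD_0^g(\xi)+e\Phi^g E)$; both hinge on the same computation.
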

\begin{proof}
Set $\cD_0^g (\xi) := ch \gamma_k \phi_g^{jk} \xi_j + \gamma_0 mc^2$
and $\rho _g^{jk} := \lim_{m\to \infty }\rho ^{jk}(g_m)$. Then
\begin{eqnarray} \label{4.8}
\lefteqn{(\cD_{0}^g (\xi) - e \Phi_1^g E) \, (\cD_{0}^g (\xi) +
e \Phi^g E)} \nonumber \\
&& = \left( c^2 h^2 \rho_g^{jk} \xi_j \xi_k + m^2 c^4 - (e \Phi^g)
^2 \right) E.
\end{eqnarray}
Let condition (\ref{4.7}) be fulfilled. Then every partial limit
$\Phi^g = \lim_{k \to \infty} \Phi (g(k))$ of $\Phi$ lies in the
interval $(- mc^2/e, \, mc^2/e)$. The identity (\ref{4.8}) implies
that
\[
\det (\cD_0^g(\xi) - e \Phi^g E) \neq 0
\]
for every $\xi \in \sR^3$. Hence, the operator $\cD_g^1 :
H^1(\sR^3, \sC^4) \to  L^2(\sR^3, \sC^4)$ is invertible and,
consequently, so is $\cD_g$. By Theorem \ref{t1.1}, $\cD$ is a
Fredholm operator.

For the reverse implication, assume that condition (\ref{4.7}) is
not fulfilled. Then there exist a number $\Phi^g \notin (-mc^2/e,
\, mc^2/e)$ and a vector $\xi^0 \in \sR^3$ such that
\[
c^2g_g^{jk} \xi_j^0 \xi_k^0 + m^2 c^4 - (e \Phi^g)^2 = 0.
\]
Given $\xi^0$ we find a vector $u \in \sC^4$ such that $v :=
\left( \cD_0^g (\xi^0) + (e \Phi^g) E\right) u \neq 0$. Then
(\ref{4.8}) implies that
\[
\left( \cD_0^g (\xi^0) - e \Phi^g E \right) v = 0,
\]
whence
\[
\det (\cD_0^g (\xi^0) - e\Phi^g E)=0.
\]
Thus, the operator $\cD_g$ is not invertible. By Theorem
\ref{t1.1}, $\cD$ cannot be a Fredholm operator.
\end{proof} 
\begin{theorem} \label{te4.2}
If condition $(\ref{1.2.3.})$ is satisfied, then
\[
\spe \cD = (-\infty, -e \Phi^{\inf} - mc^2] \cup [-e \Phi^{\sup} +
mc^2, +\infty).
\]
\end{theorem}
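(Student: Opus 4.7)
The plan is to apply the spectral formula (\ref{1.4}) from Theorem \ref{t1.2} to the Dirac operator $\cD$, compute the spectrum of each limit operator explicitly using the algebraic trick already introduced in (\ref{4.8}), and then take the union over the parameter $\Phi^g$, which ranges over the interval $[\Phi^{\inf}, \Phi^{\sup}]$.

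First, I would observe that, by Theorem \ref{t1.2} and the unitary equivalence of $\cD_g$ and $\cD_g^1$ established above, we have
\[
\spe \cD \; = \; \bigcup_g \spec \cD_g \; = \; \bigcup_g \spec \cD_g^1.
\]
Since $\cD_g^1$ has constant coefficients, the Fourier transform reduces the computation of $\spec \cD_g^1$ to a pointwise problem: $\lambda \in \spec \cD_g^1$ if and only if the matrix $\cD_0^g(\xi) - (e\Phi^g + \lambda) E$ fails to be invertible for some $\xi \in \sR^3$. I would then replace $e\Phi^g$ by $e\Phi^g + \lambda$ in the identity (\ref{4.8}) to obtain
\[
\bigl( \cD_0^g(\xi) - (e\Phi^g + \lambda) E \bigr) \bigl( \cD_0^g(\xi) + (e\Phi^g + \lambda) E \bigr) = \bigl( c^2 h^2 \rho_g^{jk} \xi_j \xi_k + m^2 c^4 - (e\Phi^g + \lambda)^2 \bigr) E,
\]
so singularity of the left factor is equivalent to vanishing of the scalar on the right for some $\xi$.

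Next, using the positivity condition (\ref{4.1}), the quantity $c^2 h^2 \rho_g^{jk}\xi_j \xi_k$ ranges over the full interval $[0,+\infty)$ as $\xi$ varies over $\sR^3$. Hence $\lambda \in \spec \cD_g^1$ precisely when $(e\Phi^g + \lambda)^2 \ge m^2 c^4$, which gives
\[
\spec \cD_g^1 = (-\infty, \, -e\Phi^g - mc^2] \cup [-e\Phi^g + mc^2, \, +\infty).
\]

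Finally, as noted right before the statement of Theorem \ref{te4.1}, the set of partial limits $\Phi^g$ is exactly the interval $[\Phi^{\inf}, \Phi^{\sup}]$. Therefore as $g$ varies the endpoints $-e\Phi^g \mp mc^2$ sweep out the intervals $[-e\Phi^{\sup} \mp mc^2, \, -e\Phi^{\inf} \mp mc^2]$, and taking the union of the half-lines yields
\[
\spe \cD = (-\infty, \, -e\Phi^{\inf} - mc^2] \cup [-e\Phi^{\sup} + mc^2, \, +\infty),
\]
as claimed. The only delicate point is verifying that every value in $[\Phi^{\inf}, \Phi^{\sup}]$ is actually attained as some $\Phi^g$; this follows from slow oscillation of $\Phi$ together with an intermediate value / connectedness argument applied along a curve going to infinity, so I expect this to be the one spot that demands care rather than a routine calculation. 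Everything else is algebra with the Clifford identity (\ref{4.3}) and the already-established description of the limit operators.
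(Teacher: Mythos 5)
Your proposal is correct and follows essentially the same route as the paper: apply Theorem \ref{t1.2}, use the identity (\ref{4.8}) with $e\Phi^g$ replaced by $e\Phi^g+\lambda$ to reduce invertibility of the symbol to non-vanishing of the scalar $c^2h^2\rho_g^{jk}\xi_j\xi_k+m^2c^4-(e\Phi^g+\lambda)^2$, and then take the union of the resulting half-lines over $\Phi^g\in[\Phi^{\inf},\Phi^{\sup}]$. The paper phrases the middle step via the explicit eigenvalues $\lambda_\pm^g(\xi)=-e\Phi^g\pm(c^2\rho_g^{jk}\xi_j\xi_k+m^2c^4)^{1/2}$, but this is the same computation, and the connectedness of the set of partial limits of $\Phi$ that you flag is exactly the fact the paper records just before Theorem \ref{te4.1}.
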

\begin{proof}
Let $\lambda \in \sC$. The symbol of the operator $\cD_g - \lambda
I$ is the function $\xi \mapsto \cD_0^g(\xi) - (e \Phi^g +
\lambda) E$. Invoking (\ref{4.8}) we obtain
\begin{eqnarray} \label{4.10}
\lefteqn{(\cD_0^g (\xi) - (e \Phi^g + \lambda) E) \, (\cD_0^g(\xi)
+ (e \Phi^g + \lambda)E)} \nonumber \\
&& = \left( c^2 \rho_g^{jk} \xi_j \xi_k + m^2c^4 - (e \Phi^g +
\lambda)^2 \right) E.
\end{eqnarray}
Repeating the arguments from the proof of Theorem \ref{te4.1}, we
find that the eigenvalues $\lambda_\pm^g (\xi)$ of the matrix
$\cD_0^g (\xi) - e \Phi_1^gE$ are given by
\begin{equation} \label{4.11}
\lambda_\pm^g (\xi) := - e \Phi^g \pm (c^2 \rho_g^{jk} \xi_j \xi_k
+ m^2 c^4)^{1/2}.
\end{equation}
From (\ref{4.11}) we further conclude
\begin{eqnarray*}
\{ \lambda \in \sR : \lambda = \lambda_-^q (\xi), \, \xi \in \sR^3
\} & = & (-\infty, -e \Phi^g - mc^2], \\
\{ \lambda \in \sR : \lambda = \lambda_+^q (\xi), \, \xi \in \sR^3
\} & = & [-e \Phi^g + mc^2, +\infty).
\end{eqnarray*}
Hence,
\[
\spec \cD^g = (-\infty, -e \Phi^g - mc^2] \cup [-e \Phi^g + mc^2,
+\infty),
\]
whence the assertion via Theorem \ref{t1.2}.
\end{proof} \\[3mm]
Thus, if $\Phi^{\sup} - \Phi^{\inf} \ge 2 m c^2/e$, then $\spe
\cD$ is all of $\sR$, whereas $\spe \cD$ has a proper gap in the
opposite case.
\subsection{Exponential estimates of eigenfunctions of the Dirac
operator}
\begin{theorem} \label{te4.4}
Let the conditions $(\ref{1.2.3.})$ be fulfilled. Let $\lambda$ be
an eigenvalue of $\cD$ which lies in the gap $(-e \Phi^{\inf} -
mc^2, \, -e \Phi^{\sup} + mc^2)$ of the essential spectrum.
Further, let $w = \exp v$ be a weight in $\cR$ with $\lim_{x \to
\infty} w(x) = \infty$ which satisfies
\begin{equation} \label{e4.19}
\limsup_{x \to \infty} |\nabla v(x)|_{\rho (x)} < \frac{1}{ch}
\sqrt{m^2 c^4 - (e \Phi^{\sup} + \lambda)^2}.
\end{equation}
Then every eigenfunction of $\cD$ associated with $\lambda$
belongs to $H^1(\sR^3,\sC^4, w)$.
\end{theorem}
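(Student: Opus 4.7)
The plan is to verify the hypothesis of Corollary \ref{t1.4}: that $\lambda \notin \spe \cD_{tw}$ for every $t \in [0, 1]$. By Theorem \ref{t1.2} this amounts to checking invertibility on $L^2(\sR^3, \sC^4)$ of every limit operator $\cD_{w,t}^g - \lambda I$, uniformly in $t \in [0, 1]$ and in the sequence $g$ defining the limit. The whole problem thus reduces to a constant-coefficient symbol calculation.

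To identify these limit operators, I would apply Proposition \ref{p1.1} to the conjugation $w^{-t}\cD w^t$ and then pass to subsequential limits as in (\ref{4.5})--(\ref{4.6}). Eliminating the magnetic potential by the unitary $T_{\vec{A}^g}$ used in the proof of Theorem \ref{te4.1} gives the constant-coefficient form
\[
\cD_{w,t}^g = c\gamma_k\phi_g^{jk}\bigl(D_j + iht(\partial_j v)^g\bigr) + \gamma_0 mc^2 - e\Phi^g,
\]
which is unitarily equivalent via Fourier transform to multiplication by the matrix symbol $\cD_{0,w,t}^g(\xi) - e\Phi^g E$, where $\cD_{0,w,t}^g(\xi) := ch\gamma_k\phi_g^{jk}(\xi_j + it(\partial_j v)^g) + \gamma_0 mc^2$.

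Next I would use the Clifford-algebra trick that drove the proofs of Theorems \ref{te4.1} and \ref{te4.2}: since $(e\Phi^g + \lambda)E$ is scalar and the $\gamma_k$ obey (\ref{4.3}),
\begin{eqnarray*}
\lefteqn{\bigl(\cD_{0,w,t}^g(\xi) - (e\Phi^g + \lambda)E\bigr)\bigl(\cD_{0,w,t}^g(\xi) + (e\Phi^g + \lambda)E\bigr)} \\
&& = \bigl[c^2 h^2 \rho_g^{jk}(\xi_j + it(\partial_j v)^g)(\xi_k + it(\partial_k v)^g) + m^2c^4 - (e\Phi^g + \lambda)^2\bigr]E.
\end{eqnarray*}
Hence invertibility of $\cD_{0,w,t}^g(\xi) - (e\Phi^g + \lambda)E$ for every $\xi \in \sR^3$ follows once the scalar inside the brackets is shown to be nowhere zero. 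Its imaginary part $2c^2 h^2 t\rho_g^{jk}\xi_j(\partial_k v)^g$ vanishes precisely when $\xi$ is $\rho_g$-orthogonal to $(\nabla v)^g$ (or $t = 0$); on that subspace the real part equals $c^2h^2\rho_g^{jk}\xi_j\xi_k - c^2h^2 t^2|(\nabla v)^g|_{\rho_g}^2 + m^2c^4 - (e\Phi^g + \lambda)^2$, and since $\rho_g^{jk}\xi_j\xi_k \geq 0$ this is strictly positive as soon as $c^2h^2 t^2|(\nabla v)^g|_{\rho_g}^2 < m^2c^4 - (e\Phi^g + \lambda)^2$, the most stringent case being $t = 1$.

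Finally I would match this bound with the hypothesis (\ref{e4.19}). Every partial limit $\Phi^g$ lies in $[\Phi^{\inf}, \Phi^{\sup}]$, and since $\lambda$ lies in the gap $(-e\Phi^{\inf} - mc^2, -e\Phi^{\sup} + mc^2)$ the quantity $m^2c^4 - (e\Phi^g + \lambda)^2$ is bounded below by $m^2c^4 - (e\Phi^{\sup} + \lambda)^2 > 0$; combined with the trivial estimate $|(\nabla v)^g|_{\rho_g} \leq \limsup_{x \to \infty}|\nabla v(x)|_{\rho(x)}$ and the strict inequality in (\ref{e4.19}), this gives the required inequality for every $g$ and every $t \in [0, 1]$, and Corollary \ref{t1.4} concludes the argument. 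The main subtlety I foresee is the estimate $|e\Phi^g + \lambda| \leq |e\Phi^{\sup} + \lambda|$ for an arbitrary partial limit, which silently uses that $\lambda$ lies on the appropriate side of the midpoint of the gap so that $e\Phi^{\sup} + \lambda$ is the endpoint farthest from zero in absolute value. Apart from this bookkeeping, the argument is a direct transcription of the scheme already used in the preceding subsection.
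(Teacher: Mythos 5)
Your proposal is correct and follows essentially the same route as the paper: reduce via Corollary \ref{t1.4} to the invertibility of the constant-coefficient limit operators $(\cD_{w,t})_g - \lambda I$, scalarize the matrix symbol with the Clifford relation (\ref{4.3}), and show the resulting scalar is nonvanishing under (\ref{e4.19}) (the paper simply checks that its real part is everywhere positive, to which your imaginary-part case split reduces). The subtlety you flag --- that $(e\Phi^g+\lambda)^2 \le (e\Phi^{\sup}+\lambda)^2$ need not hold for every partial limit $\Phi^g \in [\Phi^{\inf},\Phi^{\sup}]$ unless $e\Phi^{\sup}+\lambda$ is the endpoint of largest modulus --- is genuine, and the paper's own proof passes over it silently, asserting $\inf_{\xi}\gamma_{g,t}(\xi,\lambda)>0$ for all $g$ directly from (\ref{e4.19}).
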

\begin{proof}
Let $\lambda \in (-e \Phi^{\inf} - mc^2, -e \Phi^{\sup} + mc^2)$
be an eigenvalue of $\cD$. As above, we examine the spectra of the
limit operators $(\cD_{w,t})_g$ of $\cD_{w,t} := w^{-t} \cD w^t$
for $t$ running through $[0,1]$. Let $(\cD_{w,t})_g$ be a limit
operator of $\cD_{w,t}$ with respect to a sequence $g$ tending to
infinity. One easily checks that $(\cD_{tw})_g$ is unitarily
equivalent to the operator
\[
(\cD_{tw}^\prime)_g := A_{t,g} - e \Phi^g E
\]
where
\[
A_{t,g} := c \gamma_k \phi_g^{jk} (D_j + ith (\frac{\partial v}
{\partial x_j})^g) + \gamma_0 mc^2.
\]
The operator $A_{t,g}$ has constant coefficients, and its symbol
is
\[
\widehat{A_{t,g}}(\xi) = c \gamma_k \phi_g^{jk} (h(\xi_j + ith
(\frac{\partial v}{\partial x_j})^g)) + \gamma_0 mc^2.
\]
Further,
\begin{eqnarray*}
\lefteqn{\mathfrak{R} \left[ \left( \widehat{A_{t,g}}(\xi) - (e
\Phi^g - \lambda)E \right) \left( \widehat{A_{t,g}}(\xi) + (e
\Phi^g - \lambda)E \right) \right]} \\
&& = \mathfrak{R} \left[ c^2 h^2 \rho_g^{jk} \left(\xi_j + ith
(\frac{\partial v}{\partial x_j})^g \right) \left(\xi_k + ith
(\frac{\partial v}{\partial x_k})^g \right) \right] \\
&& \qquad + \; \mathfrak{R} \left[ \left( m^2 c^4 - (e \Phi^g +
\lambda)^2 \right) E \right] \\
&& = \left[ c^2 h^2 \rho_g^{jk} \xi_j \xi_k - c^2 h^2 t^2
\rho_g^{jk} (\frac{\partial v}{\partial x_j})^g (\frac{\partial
v}{\partial x_k})^g + \left( m^2 c^4 - (e \Phi^g + \lambda)^2
\right) \right] E  \\
&& =: \gamma _{g,t}(\xi, \lambda) E.
\end{eqnarray*}
Assume that condition (\ref{e4.19}) is fulfilled. Then, since $c^2
h^2 \rho_g^{jk} \xi_j \xi_k \ge 0$,
\[
\inf_{\xi \in \sR^n} \gamma_{g,t} (\xi, \lambda )>0
\]
for all $t \in [0,1]$ and for all sequences $g \to \infty$ for
which the limit operators exist. Hence, (\ref{e4.19}) implies that
the matrix $\widehat{A_{t,g}}(\xi) - (e \Phi^g - \lambda)E$ is
invertible for every $\xi \in \sR^3$. On the other hand, due to
the uniform ellipticity of $A_{t,g}$ one has $\lambda \in \spec
\left( \widehat{A_{t,g}}(\xi) - (e \Phi^g - \lambda)E \right)$ if
and only if there exists a $\xi_0 \in \sR^3$ such that the matrix
$\widehat{A_{t,g}}(\xi_0) - (e \Phi^g + \lambda)E$ is not
invertible. Thus, $\lambda \notin \spec (\cD_{tw})_g$ for every $t
\in [0,1]$ and every sequence $g \to \infty$. Via Corollary
\ref{t1.4}, the assertion follows.
\end{proof} \\[3mm]
We conclude by an example. Let the conditions (\ref{1.2.3.}) be
fulfilled, and let $\lambda$ be an eigenvalue of $\cD$ in $(-e
\Phi^{\inf} - mc^2, -e \Phi^{\sup} + mc^2)$ and $u_\lambda$ an
associated eigenfunction. If $a$ satisfies the estimates
\[
0 < a < \frac{\sqrt{m^2 c^4 - (e \Phi^{\sup} + \lambda)^2}} {ch
\rho^{\sup}}
\]
where
\[
\rho^{\sup} := \liminf_{x \to \infty} \sup_{\omega \in S^2}
(\rho^{jk}(x) \omega_j \omega_k)^{1/2},
\]
then $u_\lambda \in H^1(\sR^3, \sC^4, e^{a \langle x \rangle})$.


\begin{thebibliography}{99}
\bibitem{Ag1}
S. Agmon, \emph{Spectral properties of Schr\"{o}dinger operators
and scattering theory}. Ann. Scuola Norm. Sup. Pisa, Cl. Sci.
\textbf{4}(1975), 151 -- 218.
\bibitem{Ag}
S. Agmon, \emph{Lectures on Exponential Decay of Solutions of
Second Order Elliptic Equations}. Princeton University Press,
Princeton, 1982.
\bibitem{Agran}
M. S. Agranovich, \emph{Elliptic Operators on Smooth Manifolds},
in: Itogi Nauki i Tekhniki, Sovremennie Problemi Matematiki,
Fundamentalnie Napravlenia, V. 63, Partial Differential
Equations-6 (Russian), 5 -- 130.
\bibitem{Cycon}
H. L. Cycon, R. G. Froese, W. Kirsch, B. Simon,
\emph{Schr\"{o}dinger Operators with Applications to Quantum
Mechanics and Global Geometry}. Springer-Verlag, Berlin,
Heidelberg, New York 1987.
\bibitem{Le}
S. De Leo, G. Scolarici, L. Solombrino, \emph{Quaternionic eigenvalue 
problem}. J. Math. Physics {\bf 43}(2002), 11, 5815 -- 5829.
\bibitem{FrHerbst}
R. Froese, I. Herbst, \emph{Exponential bound and absence of
positive eigenvalue for N-body Schr\"{o}dinger operators}. Comm.
Math. Phys. \textbf{87}(1982), 429 -- 447.
\bibitem{FrHerbstHof}
R. Froese, I. Herbst, M. Hoffman-Ostenhof, T. Hoffman-Ostenhof,
\emph{$L^2$-exponential lower bound of the solutions of the
Schr\"{o}dinger equation}. Comm. Math. Phys. \textbf{87}(1982),
265 -- 286.
\bibitem{GF}
I. Gohberg, I. Feldman, \emph{Convolution Equations and Projection
Methods for Their Solution}. Nauka, Moskva 1971. (Russian, Engl.
transl.: Amer. Math. Soc. Transl. Math. Monographs 41, Providence,
R.I., 1974).
\bibitem{GS}
K. G\"urlebeck and W. Spr\"{o}\ss ig, \emph{Quaternionic analysis
and elliptic boundary value problems}. Akademie-Verlag, Berlin 1989.
\bibitem{KM}
M. Klein, A. Martinez, R. Seiler, X. P. Wang, \emph{On the
Born-Oppen\-heimer expansion for polyatomic molecules}. Comm.
Math. Phys. 1992, 607 -- 639. 
\bibitem{K1}
V. V. Kravchenko, \emph{Applied Quaternionic Analysis}. Research
and Exposition in Mathematics {\bf 28}, Heldermann Verlag, Lemgo 2003.
\bibitem{KS}
V. V. Kravchenko, M. V. Shapiro, \emph{Integral representations
for spatial models of mathematical physics}. Notes in Math Series
{\bf 351}, Addison Wesley Longman Ltd., Pitman Res., 1996.
\bibitem{LR}
Ya. A. Luckiy, V. S. Rabinovich, \emph{Pseudodifferential
operators on spaces of functions of exponential behavior at
infinity}. Funct. Anal. Prilozh. \textbf{4}(1977), 79 -- 80.
\bibitem{Man}
M. M\u{a}ntoiu, \emph{Weighted estimations from a conjugate
operator}. Letter Math. Phys. \textbf{51}(2000), 17 -- 35.
\bibitem{Mar00}
A. Martinez, \emph{Eigenvalues and resonances of polyatomic molecules    
in the Born-Oppenheimer approximation}, in: Schr\"{o}dinger Operators. The   
Quantum Mechanical Many-Body Problem, Lecture Notes in Physics {\bf 403},
Springer-Verlag, Berlin, Heidelberg 1992, 145 -- 152.
\bibitem{Mar}
A. Martinez, \emph{Microlocal exponential estimates and
application to tunneling}. In: Microlocal Analysis and Spectral
Theory, L. Rodino (Editor), NATO ASI Series, Series C:
Mathematical and Physical Sciences {\bf 490}, 1996, 349 -- 376.
\bibitem{Mar1}
A. Martinez, \emph{An Introduction to Semiclassical and
Microlocal Analysis}. Springer, New York 2002.
\bibitem{Nakam}
S. Nakamura, \emph{Agmon-type exponential decay estimates for
pseudodifferential operators}. J. Math. Sci. Univ. Tokyo
\textbf{5}(1998), 693 -- 712.
\bibitem{Nedelec}
L. Nedelec, \emph{Resonances for matrix Schr\"{o}dinger
operators}. Duke Math. J. {\bf 106}(2001), 2, 209 -- 236.
\bibitem{RTUB}
V. S. Rabinovich, \emph{Pseudodifferential operators with analytic
symbols and some of its applications}. In: Linear Topological
Spaces and Complex Analysis {\bf 2}, Metu-T\"ubitak, Ankara 1995,
79 -- 98.
\bibitem{Z}
V. Rabinovich, \emph{Pseudodifferential operators with analytic
symbols and estimates for eigenfunctions of Schr\"{o}dinger
operators}. Z. Anal. Anwend. (J. Anal. Appl.) {\bf 21}(2002), 2,
351 -- 370.
\bibitem{CM}
V. S. Rabinovich, \emph{On the essential spectrum of
electromagnetic Schr\"{o}dinger operators}. In: Contemp. Math.
\textbf{382}, Amer. Math. Soc. 2005, p. 331 -- 342.
\bibitem{RJMP}
V. S. Rabinovich, \emph{Essential spectrum of perturbed
pseudodifferential operators. Applications to the Schr\"{o}dinger,
Klein-Gordon, and Dirac operators}. Russian J. Math. Phys.
\textbf{12}(2005), 1, 62 -- 80.
\bibitem{RR}
V. S. Rabinovich, S. Roch, \emph{Wiener Algebra of Operators, and
Applications to Pseudodifferential Operators}. J. Anal. Appl. {\bf
23}(2004), 3, 437 -- 482.
\bibitem{RRJP}
V. S. Rabinovich, S. Roch, \emph{The essential spectrum of
Schr\"{o}dinger operators on lattices}. J. Phys. A: Math. Gen.
\textbf{39}(2006), 8377 -- 8394.
\bibitem{RRS1}
V. S. Rabinovich, S. Roch, B. Silbermann, \emph{Fredholm theory
and finite section method for band-dominated operators}. Integral
Eq. Oper. Theory \textbf{30}(1998), 4, 452 -- 495.
\bibitem{RRS2}
V. S. Rabinovich, S. Roch, B. Silbermann, \emph{Band-dominated
operators with operator-valued coefficients, their Fredholm
properties and finite sections}. Integral Eq. Oper. Theory
\textbf{40}(2001), 3, 342 -- 381.
\bibitem{RRSB}
V. S. Rabinovich, S. Roch, B. Silbermann, \emph{Limit Operators
and Their Applications in Operator Theory}. Operator Theory: Adv.
and Appl. \textbf{150}, Birkh\"{a}user, Basel, Boston, Berlin
2004.
\bibitem{Scol}
G. Scolarici, \emph{Pseudo anti-Hermitian operators in
quaternionic quantum mechanics}, J. Phys. A: Math. Gen. {\bf
35}(2002), 7493 -- 7505.
\bibitem{Taller}
B. Taller, \emph{The Dirac Equation}, Springer Verlag, Berlin,
Heidelberg, New York 1992.
\end{thebibliography}
\end{document}